\documentclass[journal]{IEEEtran}
\usepackage[utf8]{inputenc} 
\usepackage[T1]{fontenc}    
\usepackage{hyperref}       
\usepackage{booktabs}       
\usepackage{amsfonts}       
\usepackage{nicefrac}       
\usepackage{microtype}      
\usepackage{xcolor}         
\usepackage{graphicx}
\usepackage{amsthm}
\usepackage{amsmath}
\usepackage{amssymb}
\usepackage{siunitx}
\usepackage{stfloats}
\usepackage{subfigure}
\usepackage{hyperref}
\usepackage{enumerate}
\usepackage{ulem}
\usepackage{enumitem}
\usepackage{xcolor}

\usepackage{ulem}
\usepackage{bm}
\usepackage{nicefrac}
\usepackage{booktabs}
\usepackage{array}
\usepackage{multirow}
\usepackage{threeparttable}
\usepackage{makecell}
\usepackage{caption}

\usepackage{amsmath}

\usepackage{amssymb}
\usepackage{bm}
\usepackage{nicefrac}
\usepackage{booktabs}
\usepackage{array}
\usepackage{multirow}
\usepackage{threeparttable}
\usepackage{makecell}
\usepackage[procnumbered,ruled,vlined,linesnumbered]{algorithm2e}
\usepackage{stfloats}
\usepackage{graphicx}
\usepackage{subfigure}
\usepackage{enumerate}
\usepackage{epstopdf}
\usepackage{tabularx}
\usepackage{stfloats}
\usepackage{booktabs}
\usepackage{siunitx}
\usepackage{hyperref}
\usepackage{balance}

\newtheorem{problem}{Problem}
\newtheorem{theorem}{Theorem}[section]

\newtheorem{lemma}[theorem]{Lemma}

\newtheorem{fact}[theorem]{Fact}

\newtheorem{remark}{Remark}

\newtheorem{definition}[theorem]{Definition}

\def\calG{\mathcal{G}}

\newcommand{\removelatexerror}{\let\@latex@error\@gobble}

\def\calG{\mathcal{G}}
\def\calH{\mathcal{H}}





\def\norm#1{\left\| #1 \right\|}
\def\len#1{\left\lVert #1 \right\rVert}
\def\kh#1{\left( #1 \right)}

\def\len#1{\left| #1 \right|}

\newcommand\LL{\bm{\mathit{L}}}

\newcommand\KK{\boldsymbol{\mathit{K}}}
\newcommand\XX{\boldsymbol{\mathit{X}}}
\newcommand\yy{\boldsymbol{\mathit{y}}}
\newcommand\zz{\boldsymbol{\mathit{z}}}
\newcommand\xx{\boldsymbol{\mathit{x}}}
\newcommand\ff{\boldsymbol{\mathit{f}}}
\newcommand\aaa{\boldsymbol{\mathit{a}}}

\newcommand\bb{\boldsymbol{\mathit{b}}}
\newcommand\cc{\boldsymbol{\mathit{c}}}

\newcommand\ee{\boldsymbol{\mathit{e}}}

\newcommand\qq{\boldsymbol{\mathit{q}}}

\newcommand\uu{\boldsymbol{\mathit{u}}}
\newcommand\sss{\boldsymbol{\mathit{s}}}
\newcommand\hh{\boldsymbol{\mathit{h}}}

\renewcommand\SS{\boldsymbol{\mathit{S}}}
\renewcommand\AA{\boldsymbol{\mathit{A}}}
\newcommand\BB{\boldsymbol{\mathit{B}}}

\newcommand\DD{\boldsymbol{\mathit{D}}}

\newcommand\PP{\boldsymbol{\mathit{P}}}
\newcommand\MM{\boldsymbol{\mathit{M}}}
\newcommand\TT{\boldsymbol{\mathit{T}}}
\newcommand\YY{\boldsymbol{\mathit{Y}}}

\newcommand\QQ{\boldsymbol{\mathit{Q}}}

\newcommand\II{\boldsymbol{\mathit{I}}}
\newcommand\OO{\boldsymbol{\mathit{O}}}
\newcommand\UU{\boldsymbol{\mathit{U}}}
\newcommand\VV{\boldsymbol{\mathit{V}}}
\newcommand\vvv{\boldsymbol{\mathit{v}}}
\newcommand{\SDDMSolver}{\textsc{Solver}}

\DeclareMathOperator*{\argmax}{arg\,max}

\DontPrintSemicolon
\SetKw{KwAnd}{and}
\SetFuncSty{textsc}
\SetKwInOut{Input}{Input\ \ \ \ }

\SetKwInOut{Output}{Output}

\usepackage{tabularx}
\usepackage{stfloats}

\DontPrintSemicolon
\SetKw{KwAnd}{and}
\SetFuncSty{textsc}
\SetKwInOut{Input}{Input\ \ \ \ }
\SetKwInOut{Output}{Output}

\usepackage{tabularx}
\usepackage{stfloats}

\ifCLASSOPTIONcompsoc
  \usepackage[nocompress]{cite}
\else
  \usepackage{cite}
\fi

\ifCLASSINFOpdf

\else

\fi

\hyphenation{op-tical net-works semi-conduc-tor}

\begin{document}
\title{Friedkin-Johnsen Model for Opinion Dynamics on Signed Graphs}

\author{Xiaotian~Zhou, 
        Haoxin~Sun,
        Wanyue~Xu,
        Wei~Li,
        and~Zhongzhi~Zhang,~\IEEEmembership{Member,~IEEE}

\IEEEcompsocitemizethanks{
\IEEEcompsocthanksitem This work was supported by the National Natural Science Foundation of China (Nos. U20B2051, 62372112, and 61872093). \textit{(Corresponding author: Zhongzhi~Zhang.)} 

\IEEEcompsocthanksitem  Xiaotian~Zhou, Haoxin~Sun, Wanyue~Xu, and Zhongzhi Zhang  are with Shanghai Key Laboratory of Intelligent Information Processing, School of Computer Science, Fudan University, Shanghai 200433, China. Wei~Li is with the Academy for Engineering and Technology, Fudan University, Shanghai, 200433, China.  
\protect\\
E-mail: 22110240080@m.fudan.edu.cn, 21210240097@m.fudan.edu.cn, xuwy@fudan.edu.cn, fd\_liwei@fudan.edu.cn, zhangzz@fudan.edu.cn
}
\thanks{Manuscript received xxxx; revised xxxx.}
}

\markboth{IEEE Transactions on Knowledge and Data Engineering,~Vol.~xx, No.~xx, December~2023}%
{Shell \MakeLowercase{\textit{et al.}}: Bare Demo of IEEEtran.cls for Computer Society Journals}

\maketitle
\begin{abstract}
A signed graph offers richer information than an unsigned graph, since it  describes both collaborative and competitive relationships in social networks. In this paper, we study opinion dynamics on a signed graph, based on the Friedkin-Johnsen model. We first interpret the equilibrium opinion in terms of a defined random walk on an augmented signed graph, by representing the equilibrium opinion of every node as a combination of all nodes' internal opinions, with the coefficient of the internal opinion for each node being the difference of two absorbing probabilities. We then quantify some relevant social phenomena and express them in terms of the $\ell_2$ norms of vectors. We also design a nearly-linear time signed Laplacian solver for assessing these quantities, by establishing a connection between the absorbing probability of random walks on a signed graph and that on an associated unsigned graph. We further study the opinion optimization problem by changing the initial opinions of a fixed number of nodes, which can be optimally solved in cubic time. We  provide a nearly-linear time algorithm with error guarantee to approximately solve the problem. Finally, we execute extensive experiments on sixteen real-life signed networks, which show that both of our algorithms are effective and efficient, and are scalable to massive graphs with over 20 million nodes.
\end{abstract}

\begin{IEEEkeywords}
Opinion dynamics, signed graph, social network, graph algorithm, polarization, opinion optimization.
\end{IEEEkeywords}


\IEEEpeerreviewmaketitle


\section{Introduction}

\IEEEPARstart{D} {} ue to the ever-increasing availability of the power of computing, storage, and manipulation, in the past years, social media and online social networks have experienced explosive growth~\cite{Le20}, which have constituted an important part of people's lives~\cite{SmCh08}, leading to a substantial change of the ways that people exchange and form opinions~\cite{JiMiFrBu15,AnYe19} 
regarding voter, product marketing, social hotspots, and so on. Numerous recent studies have shown that online social networks and social media play a vital role during the whole process of opinion dynamics, including opinion diffusion, evolution, as well as formation~\cite{DaGoMu14,FoPaSk16,AuFeGr18}. In view of the practical relevance, opinion dynamics in social networks has received considerable recent attention from the scientific community, spanning various aspects of this dynamical process, such as modelling opinion formation, quantifying some resultant social phenomena (e.g., polarization~\cite{MaTeTs17,MuMuTs18}, disagreement~\cite{MuMuTs18}, and conflict~\cite{ChLiDe18}), and optimizing opinion~\cite{ GiTeTs13,AbKlPaTs18,ChLiSo19,MaMiTaTa21}.

The most important step in the study of opinion dynamics is probably the establishment of a mathematical model. Over the past years, a rich variety of models have been proposed for modelling opinion dynamics. Among these models,  the Friedkin-Johnsen (FJ) model~\cite{FrJo90} is a popular one, which has found practical applications. For example, the concatenated FJ model has been applied to explain the Paris Agreement negotiation process~\cite{BeWaVaHoShAl21} and the FJ model with multidimensional opinions has been used to account for the change of public opinion on the US-Iraq War~\cite{FrPrTePa16}. To the best of our knowledge, the FJ model is the only opinion dynamics model that has been confirmed by a sustained line of human-subject experiments~\cite{FrPrTePa16,FrBu17,BeWaVaHoShAl21}.
Thus, the FJ model, despite being simple, has received much recent interest. It has been recently used to quantify various social phenomena such as disagreement~\cite{MuMuTs18,DaGoLe13}, polarization~\cite{DaGoLe13,MaTeTs17,MuMuTs18}, and conflict~\cite{ChLiDe18}, for which some nearly-linear time approximation algorithms were designed~\cite{XuBaZh21,XuZhGuZhZh22}. Moreover, optimization of the overall opinion for the FJ model was also heavily studied~\cite{AhDeHaMaYa15,GiTeTs13,AbKlPaTs18,ChLiSo19,AbChKlLiPaSoTs21,MaMiTaTa21,SuZh23}. Finally, diverse variants of this well-known model have been introduced~\cite{JiMiFrBu15,AnYe19}, by considering different factors affecting opinion formation, such as susceptibility to persuasion~\cite{AbKlPaTs18}, peer pressure~\cite{SeGrSqRa19}, stubbornness~\cite{XuZhGuZhZh22}, and algorithmic filtering~\cite{ChMu20}.

Because of the  outstanding merits of the FJ model detailed above, in this work we also choose this model as our subject.
Note that in addition to the aforementioned aspects, opinion shaping is also significantly affected by the interactions among individuals. Most previous studies for the FJ model only capture positive or cooperative interactions among agents described by an unsigned graph, neglecting those negative or competitive interactions, in spite of the fact that both friendly and hostile relationships often exist simultaneously in many realistic scenarios, especially in social networks~\cite{ShAlBa19,ShPrJoBaJo16}.  In view of the ubiquity of competitive interactions in real systems, the FJ model on signed graphs has been built and studied~\cite{XuHuWu20,RaHo21,HeZhLiRu20,HeZeZhLi22}, which takes into account both collaborative and antagonistic relationships, providing a comprehensive understanding of human relationships in opinion dynamics~\cite{TaChAgLi16}.

However, the properties and complexity of various graph
problems change once there are negative edges. For instance, for a signed graph having cycles with negative edges, the shortest-path problem is known to be NP-hard~\cite{ChNaTeDh11}. In the context of the signed FJ model, the inclusion of antagonistic relationships presents more challenges to analyze relevant properties and develop good algorithms. First, the probability explanation for the equilibrium opinions in unsigned graphs~\cite{GiTeTs13}  no longer holds for their signed counterparts. Moreover, measures and their expressions for some social phenomena (polarization, disagreement, and conflict) are not fully defined and studied. Finally, existing fast approaches for eventuating or optimizing  relevant quantifies about opinion optimization on unsigned graphs~\cite{XuBaZh21,XuZhGuZhZh22} cannot be directly carried over to signed graphs, while the prior algorithm for signed graphs has cubic complexity~\cite{XuHuWu20} and is thus not applicable to large signed graphs. These enlighten us to solve the challenges for the FJ model on signed graphs.

%


In order to fill the aforementioned gap, in this paper, we present an extensive study for the FJ model on a signed graph.
Our main  contributions  are summarized as follows.
\begin{itemize}[leftmargin=10pt,topsep=6pt]

\item We interpret the equilibrium expressed opinion in terms of a newly defined random walk on an augmented signed graph. We represent the expressed opinion of each node as a combination of the internal opinions of all nodes, where the coefficient of every node's internal opinion is the difference of two absorbing probabilities.

\item  We provide measures and expressions for some relevant social phenomena, including conflict, polarization, and disagreement based on signed FJ model. Particularly, we express each quantity in terms of the $\ell_2$ norm of a vector.

\item We construct an unsigned graph for any signed graph, and prove the equivalence of the absorbing probabilities for random walks on the augmented signed graph and its associated augmented unsigned graph. Utilizing this connection, we develop a nearly-linear time signed Laplacian solver approximating the equilibrium expressed opinion and relevant social phenomena.

\item We address the problem of maximizing (or minimizing) the overall opinion by selecting a small fixed number of nodes and changing their initial opinions, which can be optimally solved in cubic time. We  propose an approximation algorithm solving the problem, which has nearly-linear time complexity and an error guarantee, based on our proposed signed Laplacian solver.

\item We demonstrate the performance of our two efficient algorithms by performing extensive experiments on realistic signed graphs, which indicate that both algorithms are fast and accurate, and are scalable to graphs with over 20 million nodes.
\end{itemize}

\section{Preliminaries}
In this section, we provide an overview of the notations, signed graphs and their related matrices, and the FJ model.

\subsection{Notations}

For a vector $\aaa$, the $i$-th element is denoted as $\aaa_i$. For a matrix $\AA$, the element at the $i$-th row and $j$-th column is denoted as $\AA_{i,j}$; the $i$-th row and $j$-th column are represented as $\AA_{i,:}$ and $\AA_{:,j}$, respectively. For matrix $\AA$ and vector $\aaa$, $\AA^{\top}$ and $\aaa^{\top}$ denote, respectively, their transpose. $\II$ and $\OO$ denote, respectively, the identity matrix and the zero matrix of the appropriate dimension. The vector $\ee_i$ is a vector of appropriate dimension, where the $i$-th element is 1 and all other elements are 0’s. The vector $\mathbf{0}$ (or $\mathbf{1}$)  is a vector of appropriate dimension with all entries equal to 0 (or 1). For a vector $\aaa$, its $\ell_2$ norm  is $\norm{\aaa}_2 = \sqrt{\sum_i \aaa_i^2}$, its $\ell_0$ norm is denoted as $\norm{\aaa}_0$ defined as the number of nonzero elements in $\aaa$, and its norm with respect to a matrix $\AA$ is $\norm{\aaa}_{\AA} = \sqrt{\aaa^\top \AA \aaa}$.

For two nonnegative scalars $a$ and $ b$ and $0 < \epsilon < 1/2$, we say that $a$ is an $\epsilon$-approximation of $b$, denoted by $a \approx_{\epsilon} b$, if $(1-\epsilon) a \leq b \leq (1+\epsilon) a$. For two positive semidefinite matrices $\XX$ and $\YY$, we say that $\XX \preceq \YY$ if $\YY - \XX$ is positive semidefinite, meaning that $\xx^\top \XX \xx \leq \xx^\top \YY \xx$ holds for any real vector $\xx$.

\subsection{Signed Graph and Their Related Matrices}

Let $\calG= (V,E,w)$ denote a connected undirected signed graph with $n=|V|$ nodes, $m=|E|$ edges, and edge sign function $w : E \to \{-1,+1\}$. We call an edge $e=(i,j)\in E$ a positive (or negative) edge if $w(i,j)$ is +1 (or -1). The edge sign $w(i,j)$ of edge $e=(i,j)\in E$ represents the relationship between node $i$ and node $j$. If $w(i,j)=+1$, nodes $i$ and  $j$ are cooperative; and if $w(i,j)=-1$, nodes $i$ and  $j$ are competitive. Let $N_i$ be the set of neighbours of node $i$, which can be classified into two disjoint subsets: the friend set $N^F_i$ and the enemy (or opponent) set $N^E_i$. Any node in $N^F_i$ is directly connected to $i$ by a positive edge, while any node in $N^E_i$ is linked to $i$ by a negative edge. The degree $d_i$ of node $i$ is defined as $d_i=\sum_{j\in N_i} |w(i,j)|$.

For an undirected signed graph $\calG$, let $\DD $ be its degree matrix, which is a diagonal matrix defined as $\DD=\text{diag}\{d_1,d_2,\ldots,d_n\}$. Let $\AA \in \mathcal{R}^{n \times n}$ be its signed adjacency matrix, with the entry $\AA_{i,j}$ defined as follows: $\AA_{i,j} = w(i,j)$ for any edge $(i,j)\in E$, and $\AA_{i,j} =0$  otherwise. The signed Laplacian matrix $\LL$ of $\calG$ is defined as $\LL=\DD-\AA$. For any edge in $\calG$, we fix an arbitrary orientation. Then we can define the edge-node incidence matrix $\BB_{m\times n}$ of $\calG$, whose entries are defined as follows. For any edge $e=(u,v)$, $\BB_{e,u}=1$ if node $u$ is the head of edge $e$; $\BB_{e,v}=-w(u,v)$ if node $v$ is the tail of $e$; and $\BB_{e,x}=0$ for $x \neq u,v$. Then, matrix $\LL$ can be written as $\LL=\BB^\top \BB$, implying that $\LL$ is symmetric and positive semidefinite. Note that matrix $\LL$ is called ``opposing'' Laplacian in~\cite{ShAlBa19}, where another Laplacain matrix named ``repelling'' Laplacian is also defined. In this paper, we focus on the  ``opposing'' Laplacian $\LL$, called Laplacian matrix hereafter.

According to the sign of edges, any signed graph $\calG$ can be divided into two spanning subgraphs: the positive graph $\calG^+=(V,E^+,w^+)$ and the negative graph $\calG^-=(V,E^-,w^-)$. Both $\calG^+$ and $\calG^-$ share the same node set $V$ as graph $\calG$.  $\calG^+$ contains all positive edges in $E$, while $\calG^-$ includes all negative edges in $E$. In other words, $E=E^+ \cup E^-$, and for any edge $e \in E $ with end nodes $i$ and $j$, $w^+(i,j)=1$ if $e \in E^+$,  $w^-(i,j)=-1$ if $e \in E^-$, and  $w^+(i,j)= w^-(i,j)=0$ otherwise. Let $\AA^+$ and $\AA^-$ be the adjacency matrices of $\calG^+$ and $\calG^-$, respectively. Then, $\AA^+$ is a non-negative matrix, while $\AA^-$ is a non-positive matrix. Similarly, for graphs $\calG^+$ and $\calG^-$, we use $\DD^+$ and $\DD^-$ to denote their degree matrices, use $\BB^+$ and $\BB^-$ to represent their incidence matrices, and use $\LL^+$ and $\LL^-$ to denote their Laplacian matrices.

We now provide some useful matrix relations involving the signed Laplacian matrix $\LL$ in Fact~\ref{fact}, which will be used in the proofs in the sequel.
\begin{fact}~\label{fact}
    We have $\LL^+ \preceq \LL$, $\LL^- \preceq \LL$, $\II \preceq \II + \LL$,  $\LL \preceq \II + \LL$, $ \II + \LL \preceq 2n \II$, and $ \frac{1}{2n}\LL \preceq \II$.
\end{fact}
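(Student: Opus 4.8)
The plan is to reduce each of the six relations to a positive-semidefiniteness statement about a difference of matrices, leveraging two structural facts about the signed Laplacian that I would establish first. The first is the additive decomposition $\LL = \LL^+ + \LL^-$, which is immediate from $\DD = \DD^+ + \DD^-$ and $\AA = \AA^+ + \AA^-$. The second is that each of $\LL$, $\LL^+$, and $\LL^-$ is a Gram matrix of its corresponding incidence matrix, namely $\LL = \BB^\top\BB$, $\LL^+ = (\BB^+)^\top\BB^+$, and $\LL^- = (\BB^-)^\top\BB^-$; consequently all three are positive semidefinite. (One can check directly that the negative-edge convention $\BB_{e,v} = -w(u,v) = +1$ still yields a genuine Gram matrix, so $\LL^- \succeq 0$ despite the negative edges.)

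With these two facts, the first four relations are essentially free. For $\LL^+ \preceq \LL$ I would simply write $\LL - \LL^+ = \LL^- \succeq 0$; symmetrically, $\LL - \LL^- = \LL^+ \succeq 0$ gives $\LL^- \preceq \LL$. The relations $\II \preceq \II + \LL$ and $\LL \preceq \II + \LL$ follow because the relevant differences are exactly $\LL \succeq 0$ and $\II \succeq 0$, respectively.

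The one relation that requires genuine work, and which I expect to be the \textbf{main obstacle}, is the upper bound $\II + \LL \preceq 2n\II$, since it amounts to bounding the largest eigenvalue of the signed Laplacian. Here I would evaluate the quadratic form directly: from $\LL = \BB^\top\BB$ and the definition of $\BB$ one obtains $\xx^\top\LL\xx = \sum_{(i,j)\in E}(\xx_i - w(i,j)\xx_j)^2$. Because $w(i,j) = \pm 1$, the elementary inequality $(a-b)^2 \leq 2a^2 + 2b^2$ gives $\xx^\top\LL\xx \leq 2\sum_i d_i\xx_i^2 = 2\,\xx^\top\DD\xx$, and since each $d_i \leq n-1$ in a simple graph on $n$ nodes, $\xx^\top\LL\xx \leq 2(n-1)\norm{\xx}_2^2$. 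Hence $\LL \preceq 2(n-1)\II$, so $\II + \LL \preceq (2n-1)\II \preceq 2n\II$. The sign $w(i,j)$ inside the square is harmless precisely because $w(i,j)^2 = 1$, which is the key point to verify carefully.

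Finally, $\frac{1}{2n}\LL \preceq \II$ is just $\LL \preceq 2n\II$, which is weaker than the bound $\LL \preceq 2(n-1)\II$ already established, and so it follows immediately. I anticipate no subtleties beyond correctly computing the quadratic form of the signed Laplacian and confirming the incidence-matrix factorizations of $\LL^+$ and $\LL^-$.
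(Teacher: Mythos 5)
Your proof is correct and complete; the paper actually states Fact~\ref{fact} without any proof, and your argument --- the decomposition $\LL = \LL^+ + \LL^-$ together with the Gram factorizations $\LL = \BB^\top\BB$, $\LL^\pm = (\BB^\pm)^\top\BB^\pm$ for the first four relations, and the quadratic-form bound $\xx^\top\LL\xx = \sum_{(i,j)\in E}(\xx_i - w(i,j)\xx_j)^2 \leq 2\xx^\top\DD\xx \leq 2(n-1)\norm{\xx}_2^2$ for the last two --- is exactly the standard justification the authors implicitly rely on. All steps check out, including the key observation that $w(i,j)^2 = 1$ makes the sign harmless in the eigenvalue bound.
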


\subsection{Friedkin-Johnsen Model on Signed Graphs}



The Friedkin-Johnsen (FJ) model is a popular model for analysing opinion evolution and formation on graphs. In the FJ model~\cite{FrJo90}, each node or agent $i\in V$ is associated with two opinions. One is the internal opinion $\sss_i$, which is a constant value in the interval $[-1,1]$, reflecting the intrinsic position of node $i$ on a certain topic. The other is the expressed opinion $\zz_i(t)$ at time $t$. At each time step, node $i$ updates its expressed opinion according to the rule of minimizing its psycho-social cost $(\zz_i(t)-\sss_i)^2+\sum_{j\in N_i}(\zz_i(t)-\AA_{i,j}\zz_j(t))^2$, which is a function of its expressed opinion, its neighbours' expressed opinions, and its internal opinion~\cite{BiKlOr15,HeZhLiRu20,RaHo21}. To minimize this social cost function, each agent updates its expressed opinion as follows
\begin{equation}\label{FJ}
	\zz_i(t+1) = \frac{\sss_i +\sum_{j\in N_i}\AA_{i,j}\zz_j(t)}{1+d_i}.
\end{equation}
That is to say, the expressed opinion $\zz_i(t+1)$ for node $i$ at time $t+1$ is updated by  averaging its internal opinion $\sss_i$, the expressed opinions of its friends at time $t$, and the opposite expressed opinions of its opponents at time $t$. We define the initial opinion vector as $\sss = (\sss_1,\sss_2,\ldots,\sss_n)^\top$, and define the vector of expressed opinions at time $t$ as $\zz(t) = (\zz_1(t),\zz_2(t),\ldots,\zz_n(t))^\top$. After long-time evolution, the expressed opinion vector converges to an equilibrium vector $\zz = (\zz_1,\zz_2,\ldots,\zz_n)^\top$ satisfying
\begin{equation}\label{equ:exp}
    \zz= \lim_{t \to \infty} \zz(t)=(\II+\LL)^{-1}\sss.
\end{equation}

Although on both signed and unsigned graphs, the dynamics equation~\eqref{FJ} and the equilibrium opinion~\eqref{equ:exp} for the FJ model have the same form of mathematical expressions. There are some notable differences between the unsigned FJ model~\cite{FrJo90} and the signed FJ model~\cite{XuHuWu20,HeZhLiRu20}. For example, their fundamental matrix~\cite{GiTeTs13} $(\II+\LL)^{-1}$ have obviously different properties. For an unsigned graph, matrix $(\II+\LL)^{-1}$ is a doubly stochastic matrix, with every element being positive~\cite{GiTeTs13,MaTeTs17}. As a result, the expressed opinion of each node is a convex combination of the internal opinions of all nodes. In contrast, for a signed graph, $(\II+\LL)^{-1}$ is generally not a positive matrix and is thus not doubly stochastic. Thus, the expressed opinion of each node is not a convex combination of the internal opinions of all nodes. This poses an important challenge for computing expressed opinion vector and analyzing and optimizing relevant problems for opinion dynamics on signed graphs since existing algorithms for unsigned graphs are no longer applicable.

\section{Interpretation of Expressed Opinion for the Signed FJ Model}

The unsigned FJ model has been interpreted from various perspectives. For example, the opinion evolution process has been explained according to Nash equilibrium~\cite{BiKlOr15}, and the equilibrium opinion has been accounted for based on absorbing random walks~\cite{GiTeTs13} or spanning diverging forests~\cite{XuBaZh21,XuZhGuZhZh22,SuZh23}. However, these interpretations do not hold for signed graphs anymore, since those methods are based on the unsigned graph structure and do not apply to signed graphs.

In this section, we give an explanation for the equilibrium expressed opinions of the signed FJ model. For this purpose, we introduce absorbing random walks on signed graphs, based on which we provide an interpretation of the equilibrium expressed opinion of each node in terms of the absorbing probabilities, and highlight the parallels and distinctions between the FJ model on signed and unsigned graphs.


We begin by extending the signed graph $\calG=(V,E,w)$ to an augmented graph $\calH=(X,R,y)$ with absorbing states, defined as follows:
\begin{enumerate}[leftmargin=12pt,topsep=6pt]
    \item The node set $X$ is defined as $X=V \cup \bar{V}$, where $\bar{V}$ is a set of $n$ nodes such that for each node $i\in V$, there is a copy node $\sigma(i) \in \bar{V}$ of node $i$;
    \item The edge set $R$ includes all the edges $E$ of $\calG$, plus a new set of edges between each node $i\in V$ and the copy node $\sigma(i) \in \bar{V}$ of node $i$. That is, $R=E \cup \bar{E}$, where $\bar{E}=\{(i,\sigma(i))| i\in V \}$;
    \item The edge sign $y(e=(i,\sigma(i)))$ of each new edge $e=(i,\sigma(i)) \in \bar{E}$ is set to +1. For edge $e=(i,j) \in E$, its edge sign $y(e=(i,j))$ is identical to the sign $w(e=(i,j))$ of the corresponding edge in $\calG$.
\end{enumerate}

Neglecting the sign of each edge, we can define an absorbing random walk on this augmented graph $\calH$, where nodes in $V$ are transient states and nodes in $\bar{V}$ are absorbing states. Then,
the transition matrix $\PP$ for this  absorbing random walk has the form
\begin{equation*}
    \PP=\begin{bmatrix}
        (\II+\DD)^{-1} (\AA^+-\AA^-) & (\II+\DD)^{-1}\\
         \OO & \II
    \end{bmatrix}
\end{equation*}



For a random walk $r=(u_0,u_1,\ldots,u_t)$ on signed graphs, we introduce a sign $ l(r)$ as~\cite{LiChWaZh13,NePe22}
\begin{equation*}
   l(r)=\text{sign}\left(\prod_{x=1}^t w(u_{x-1},u_{x})\right),
\end{equation*}
where $\text{sign}(x)$ denotes the sign of nonzero $x$ defined as: $\text{sign}(x)=+1$ if $x>0$, $\text{sign}(x)=-1$ otherwise. Thus, the sign $l(r)$ of the walk $r$ is $+1$ if the number of its negative edges is even, and $ l(r)=- 1$ otherwise. For  a random walk $r$ on signed graphs, if $ l(r)=+1$ we call it a positive random walk; if $ l(r)=-1$ we call it a negative random walk.


Based on the above-defined two types of random walks, we define two absorbing probabilities for an absorbing random walk on a signed graph. For each node $i\in V$ and any absorbing state $\sigma(j)\in \bar{V}$, we define two probabilities as follows:
\begin{itemize}[leftmargin=12pt,topsep=6pt]
\item $p_{i,\sigma(j)}$: the probability that a positive random walk starting from transient state $i$ is absorbed by node $\sigma(j)$.
\item $q_{i,\sigma(j)}$: the probability that a negative random walk starting from transient state $i$ is absorbed by node $\sigma(j)$.
\end{itemize}
Using these two absorbing probabilities, we provide an interpretation of the equilibrium opinion for each node.

\begin{theorem}
\label{thm:randomwalk}
  For the FJ model of opinion dynamics on a signed graph $\calG=(V,E,w)$, let $\sss$ be the initial opinion vector. Then the equilibrium expressed opinion of node $i\in V$ is expressed by
    \begin{equation*}
        \zz_i=\sum_{j\in V}(p_{i,\sigma(j)}-q_{i,\sigma(j)})\sss_j.
    \end{equation*}
\end{theorem}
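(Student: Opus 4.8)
The plan is to reduce the claim to the single entrywise matrix identity $\big[(\II+\LL)^{-1}\big]_{i,j} = p_{i,\sigma(j)} - q_{i,\sigma(j)}$, and then to establish this identity by interpreting the \emph{signed} absorption probability as a sign-weighted generating function over absorbing walks. Since \eqref{equ:exp} gives $\zz = (\II+\LL)^{-1}\sss$, we immediately have $\zz_i = \sum_{j\in V}\big[(\II+\LL)^{-1}\big]_{i,j}\sss_j$, so the theorem follows the moment the entrywise identity is in hand.

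To set up the walk count I would read off from $\PP$ the transient block $\WW = (\II+\DD)^{-1}(\AA^+-\AA^-)$ and the transient-to-absorbing block $\RR = (\II+\DD)^{-1}$; note $\RR$ is diagonal and every edge $(j,\sigma(j))$ it encodes is positive. A walk absorbed at $\sigma(j)$ is therefore a transient walk $i=u_0\to\cdots\to u_t=j$ inside $V$ followed by the single positive absorbing step $j\to\sigma(j)$, so its sign $l(r)$ depends only on the signs of the traversed edges of $E$ and not on the final step. Consequently the difference collapses to $p_{i,\sigma(j)}-q_{i,\sigma(j)} = \sum_{r} l(r)\,\Pr(r)$, the sum ranging over all such walks.

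The key step, which I expect to be the main technical point, is to identify the single matrix whose powers generate this signed sum and to verify the sign bookkeeping. I would define $\WW_s = (\II+\DD)^{-1}\AA$, so that $[\WW_s]_{a,b} = \mathrm{sign}(w(a,b))\,\Pr(a\to b)$ carries exactly the sign of each step times its unsigned transition probability, and then prove by induction on $t$ that $[\WW_s^{\,t}]_{i,j} = \sum_{r} l(r)\,\Pr(r)$ over length-$t$ transient walks from $i$ to $j$: each matrix product appends one step, multiplying transition probabilities and, because $l$ is multiplicative over edges, accumulating signs correctly. Summing over $t$ and appending the final absorbing factor $1/(1+d_j)$ from $\RR$ yields $p_{i,\sigma(j)}-q_{i,\sigma(j)} = \big[(\II-\WW_s)^{-1}\RR\big]_{i,j}$.

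It then remains to justify convergence and simplify. For convergence, I would observe that $|\WW_s|=\WW$ entrywise (since $|\AA_{a,b}|=(\AA^+-\AA^-)_{a,b}$) and that the unsigned absorbing chain terminates with probability one, giving $\rho(\WW_s)\le\rho(\WW)<1$ so the Neumann series $\sum_{t\ge0}\WW_s^{\,t}=(\II-\WW_s)^{-1}$ converges. Finally I would compute $\II-\WW_s=(\II+\DD)^{-1}(\II+\DD-\AA)=(\II+\DD)^{-1}(\II+\LL)$, whence $(\II-\WW_s)^{-1}\RR=(\II+\LL)^{-1}(\II+\DD)(\II+\DD)^{-1}=(\II+\LL)^{-1}$, which proves the entrywise identity and therefore the theorem.
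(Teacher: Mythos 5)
Your proof is correct and follows essentially the same route as the paper: both arguments identify $p_{i,\sigma(j)}-q_{i,\sigma(j)}$ with $\sum_{t\ge 0}\bigl[((\II+\DD)^{-1}(\AA^++\AA^-))^{t}(\II+\DD)^{-1}\bigr]_{i,j}$ and sum the Neumann series to obtain $[(\II+\LL)^{-1}]_{i,j}$. The only organizational difference is that the paper tracks the positive- and negative-walk probabilities $\XX(t)$ and $\YY(t)$ separately through a coupled recursion (thereby also obtaining closed forms for $p_{i,\sigma(j)}$ and $q_{i,\sigma(j)}$ individually, which it reuses later), whereas you work directly with the signed generating function $\sum_r l(r)\Pr(r)=[\WW_s^{\,t}]_{i,j}$ for the difference alone; you also make explicit the convergence step $\rho(\WW_s)\le\rho(\WW)<1$ that the paper leaves implicit.
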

\begin{proof}
First, we use a recursive method to derive the expressions for the two absorbing probabilities $p_{i,\sigma(j)}$ and $q_{i,\sigma(j)}$ for any $i\in V$ and $\sigma(j)\in \bar{V}$. To this end, for an absorbing random walk on graph $\calH$, we define two matrices, $\XX(t)$ and $\YY(t)$ for $t\geq 0$. The element $\XX(t)_{i,j}$ of $\XX(t)$ represents the probability that a positive random walk starting from a transient state $i$ arrives at a transient state $j$ at the $t$-th step. The element $\YY(t)_{i,j}$ of $\YY(t)$ denotes the probability that a negative random walk starting from a transient state $i$ reaches a transient state $j$ at the $t$-th step.

By definition of the signed random walk, we obtain the following recursive relations
\begin{align*}
&    \XX(t+1)_{i,:}=\XX(t)_{i,:}(\II+\DD)^{-1}\AA^+ -\YY(t)_{i,:}(\II+\DD)^{-1}\AA^- ,\\
 &\YY(t+1)_{i,:}=\YY(t)_{i,:}(\II+\DD)^{-1}\AA^+ -\XX(t)_{i,:}(\II+\DD)^{-1}\AA^-.
\end{align*}
Considering $\XX(0)=\II$ and $\YY(0)=\OO$, we obtain
\begin{small}
\begin{align*}
   &\XX(t)_{i,:}+\YY(t)_{i,:} = \ee_i^\top\left((\II+\DD)^{-1}(\AA^+-\AA^-)\right)^t ,\\
   &\XX(t)_{i,:}-\YY(t)_{i,:} = \ee_i^\top\left((\II+\DD)^{-1}(\AA^++\AA^-)\right)^t.
\end{align*}
Solving the above two equations leads to
\begin{align*}
   \XX(t)_{i,:} = &\frac{\ee_i^\top}{2}(((\II+\DD)^{-1}(\AA^+-\AA^-))^t\\
   &+((\II+\DD)^{-1}(\AA^++\AA^-))^t),\\
   \YY(t)_{i,:} = &\frac{\ee_i^\top}{2}(((\II+\DD)^{-1}(\AA^+-\AA^-))^t\\
   &-((\II+\DD)^{-1}(\AA^++\AA^-))^t).
\end{align*}
\end{small}

Note that if a walk arrives at $j\in V$ at a certain step, it will move to $\sigma(j)$ with probability $1/(1+d_j)$. Then the absorbing probability $p_{i,\sigma(j)}$ is exactly the sum of the probabilities of a positive random walk starting at node $i$ and being absorbed by node $\sigma(j)$ in $t=1,2,\ldots,\infty$ steps. Thus, we have
\begin{small}
\begin{align*}
    &\quad p_{i,\sigma(j)}= \sum_{t=0}^\infty \frac{1}{1+d_j} \XX(t)_{i,j}\\
    &=\frac{1}{2}\ee_i^\top ((\II+\DD-\AA^++\AA^-)^{-1} +(\II+\DD-\AA^+-\AA^-)^{-1})\ee_j.
\end{align*}
\end{small}
In a similar way, we obtain
\begin{small}
\begin{align*}
   &\quad  q_{i,\sigma(j)} = \sum_{t=0}^\infty \frac{1}{1+d_j} \YY(t)_{i,j}\\
    &=\frac{1}{2}\ee_i^\top ((\II+\DD-\AA^++\AA^-)^{-1}-(\II+\DD-\AA^+-\AA^-)^{-1})\ee_j.
\end{align*}
\end{small}
Using the obtained expressions for $p_{i,\sigma(j)}$ and $q_{i,\sigma(j)}$, we obtain
\begin{align*}
    &\sum_{j\in V} (p_{i,\sigma(j)} - q_{i,\sigma(j)}) \sss_j= \ee^\top_i(\II+\LL)^{-1}\sss=\zz_i,
\end{align*}
which completes the proof.
\end{proof}

Theorem~\ref{thm:randomwalk} shows that for each $i \in V$, its equilibrium expressed opinion $\zz_i$ is a weighted average of the internal  opinions of all nodes, with the weight of the internal opinion $s_j$ being $p_{i,\sigma(j)}- q_{i,\sigma(j)}$, where $j=1,2\ldots,n$. When there are no negative edges, the absorbing probability by negative random walks of each node is 0, and Theorem~\ref{thm:randomwalk} is consistent with the previous result obtained for unsigned graphs~\cite{GiTeTs13}. However, $p_{i,\sigma(j)}- q_{i,\sigma(j)}$ may be negative in the presence of negative edges, which is in sharp contrast to that for unsigned graphs, where the absorbing probability $p_{i,\sigma(j)}$ is always positive and $q_{i,\sigma(j)}$ is zero.


\section{Measures and Fast Evaluation of Some Social Phenomena}

Besides the equilibrium expressed opinions, quantifying some social phenomena based on the FJ model has also attracted much interest. Although some social phenomena such as polarization, and disagreement have been quantified and evaluated for the unsigned FJ model~\cite{ChLiDe18,MuMuTs18,DaGoLe13,XuBaZh21}, the definition, expressions and algorithms for these quantities on signed graphs are still lacking, since the antagonistic relationships between individuals make it more complicated to measure these social phenomena.

In this section, we first introduce the measures for some relevant social phenomena based on the signed FJ model and express them in terms of quadratic forms and the $\ell_2$ norms of vectors. Then, we design a signed Laplacian solver and use it to propose an efficient algorithm to approximate these social phenomena on signed graphs.


\subsection{Quantitative Measures for Social Phenomena}\label{Subsec:Meas}

We focus on some common phenomena in real social systems, such as conflict, polarization, and disagreement. Although quantitative indices and computation for these phenomena  have been extensively studied for unsigned graphs, their counterparts for signed graphs require careful reconsideration due to the existences of negative social links. Here we introduce the measures to quantify these phenomena based on the FJ model on signed graphs.


As in unsigned graphs~\cite{ChLiDe18}, we  define the internal conflict of the signed FJ model as follows.
\begin{definition}
For the FJ model on a signed graph $\calG = (V,E,w)$ and the initial opinion vector $\sss$, the internal conflict $I(\calG,\sss)$ is the sum of squares of the differences between the internal and expressed opinions over all nodes:
\begin{equation*}
I(\calG,\sss) = \sum_{i \in V} (\zz_i-\sss_i)^2.
\end{equation*}
\end{definition}

Besides internal conflict, individuals also suffer from the social cost incurred by their neighbours, as they prefer to express  similar opinions as their friends or opposite opinions to their opponents. This social cost is called external conflict or disagreement on unsigned graphs~\cite{ChLiDe18,MuMuTs18,DaGoLe13}. 
Below we extend the measure of the disagreement to the signed FJ model.

\begin{definition}
For the FJ model on a signed graph $\calG = (V,E,w)$ with an initial opinion vector $\sss$, its disagreement (or external conflict) $D(\calG,\sss)$ is defined as
\begin{equation*}
D(\calG,\sss) = \sum_{(i,j) \in E} (\zz_i-\AA_{i,j}\zz_j)^2.
\end{equation*}
\end{definition}

Note that at time $t$ the stress or psycho-social cost function of node $i$ is $(\zz_i(t)-\sss_i)^2+\sum_{j\in N_i}(\zz_i(t)-\AA_{i,j}\zz_j(t))^2$. At every time step, each node updates its expressed opinion with an aim to minimize its stress~\cite{BiKlOr15,HeZhLiRu20,RaHo21}. At the equilibrium, the total social cost, defined as the sum of the cost function over all nodes is $I(\calG,\sss) +D(\calG,\sss)$.

Since for a node $i$, each neighbour is either a friend or an enemy, its external social cost can be decomposed into two components: the cost $\sum_{j\in N^F_i} (\zz_i(t)-\zz_j(t))^2$ incurred from disagreeing with friends and the cost $\sum_{j\in N^E_i} (\zz_i(t)+\zz_j(t))^2$ incurred from agreeing with opponents. Hence, we give two variants of the disagreement for the FJ model on a signed graph $\calG$: disagreement with friends denoted by $F(\calG,\sss)$, and agreement with opponents denoted by $E(\calG,\sss)$.
\begin{definition}
For the FJ model on a signed graph $\calG = (V,E,w)$ with an initial opinion vector $\sss$, the disagreement with friends $F(\calG,\sss)$ is the sum of squares of the differences between
expressed opinions over all pairs of friends:
\begin{equation*}
F(\calG,\sss) = \sum_{(i,j) \in E^+} (\zz_i-\zz_j)^2.
\end{equation*}
\end{definition}

\begin{definition}
For the FJ model on a signed graph $\calG = (V,E,w)$ with an initial opinion vector $\sss$, the agreement with opponents $E(\calG,\sss)$ is the sum of squares of the sums of expressed opinions over all pairs of  opponent nodes:
\begin{equation*}
E(\calG,\sss) = \sum_{(i,j) \in E^-} (\zz_i+\zz_j)^2.
\end{equation*}
\end{definition}


Polarization can be thought of as the degree to which expressed opinions deviate from neutral  opinions, i.e., opinion value 0. In the following, we introduce the metric  of polarization~\cite{MaTeTs17,RaHo21}.

\begin{definition}
For the FJ model on a signed graph  $\calG = (V,E)$ with an initial opinion vector $\sss$, the polarization is defined as:
\begin{equation*}
P(\calG,\sss) = \frac{1}{n}\sum_{i \in V} \zz_i^2.
\end{equation*}
\end{definition}

According to their definitions, we can explicitly represent the aforementioned social phenomena  in terms of quadratic forms and the $\ell_2$ norms of vectors, as summarized in Lemma~\ref{lem:l2}.

\begin{lemma}\label{lem:l2}
For the FJ model on a signed graph $\calG=(V,E,w)$ with an initial opinion vector $\sss$, the internal conflict $I(\calG,\sss)$, disagreement $D(\calG,\sss)$, disagreement with friends $F(\calG,\sss)$, agreement with opponents $E(\calG,\sss)$, and polarization $P(\calG,\sss)$ can be conveniently expressed in terms of quadratic forms and the $\ell_2$ norms  as:
\begin{align*}
   I(\calG,\sss)& =  \zz^\top \LL^2 \zz = \sss^\top (\II+\LL)^{-1} \LL^2 (\II+\LL)^{-1} \sss \\
&=\norm{\LL(\II+\LL)^{-1}\sss}^2_2, \\
 F(\calG,\sss)&= \zz^\top \LL^+ \zz = \sss^\top (\II+\LL)^{-1} \LL^+ (\II+\LL)^{-1} \sss \\
&= \norm{\BB^+ (\II+\LL)^{-1}\sss}_2^2,\\
 E(\calG,\sss)&= \zz^\top \LL^- \zz = \sss^\top (\II+\LL)^{-1} \LL^- (\II+\LL)^{-1} \sss \\
&= \norm{\BB^- (\II+\LL)^{-1}\sss}_2^2,\\
  D(\calG,\sss)&= \zz^\top \LL \zz = \sss^\top (\II+\LL)^{-1} \LL (\II+\LL)^{-1} \sss \\
&= \norm{\BB (\II+\LL)^{-1}\sss}_2^2,\\
  P(\calG,\sss) &= \frac{1}{n}\zz^\top \zz=  \frac{ 1}{n}\sss (\II+\LL)^{-2} \sss = \frac{1}{n} \norm{ (\II+\LL)^{-1}\sss}_2^2.
\end{align*}
\end{lemma}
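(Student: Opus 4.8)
The plan is to verify each of the five identities by starting from its combinatorial definition, rewriting the sum as a quadratic form in $\zz$, substituting the equilibrium relation $\zz=(\II+\LL)^{-1}\sss$ from~\eqref{equ:exp}, and finally recognizing the resulting quadratic form as a squared $\ell_2$ norm. The backbone of the whole lemma is the factorization $\LL=\BB^\top\BB$ (and analogously $\LL^+=(\BB^+)^\top\BB^+$, $\LL^-=(\BB^-)^\top\BB^-$), together with the elementary fact that $\xx^\top\MM^\top\MM\xx=\norm{\MM\xx}_2^2$ for any matrix $\MM$. So the key observation I would establish first is that each social-cost sum equals $\zz^\top\MM^\top\MM\zz$ for an appropriate incidence matrix $\MM$, after which the $\ell_2$-norm form is immediate.

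I would treat the cases in order of increasing subtlety. For $D(\calG,\sss)=\sum_{(i,j)\in E}(\zz_i-\AA_{i,j}\zz_j)^2$, I would note that for an edge $(i,j)$ with chosen orientation, the corresponding row of $\BB$ is $\ee_i^\top-\AA_{i,j}\ee_j^\top$ (since $\BB_{e,u}=1$ at the head and $\BB_{e,v}=-w(u,v)=-\AA_{i,j}$ at the tail), so $(\BB\zz)_e=\zz_i-\AA_{i,j}\zz_j$; summing the squares over all rows gives $\norm{\BB\zz}_2^2=\zz^\top\BB^\top\BB\zz=\zz^\top\LL\zz$. The friend and opponent cases are the positive-edge and negative-edge restrictions: on $E^+$ we have $\AA_{i,j}=+1$ so each row of $\BB^+$ reads $\ee_i^\top-\ee_j^\top$ and $(\BB^+\zz)_e=\zz_i-\zz_j$, yielding $F=\norm{\BB^+\zz}_2^2=\zz^\top\LL^+\zz$; on $E^-$ we have $\AA_{i,j}=-1$ so each row of $\BB^-$ reads $\ee_i^\top+\ee_j^\top$ and $(\BB^-\zz)_e=\zz_i+\zz_j$, yielding $E=\norm{\BB^-\zz}_2^2=\zz^\top\LL^-\zz$. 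For the internal conflict $I=\sum_i(\zz_i-\sss_i)^2=\norm{\zz-\sss}_2^2$, I would use~\eqref{equ:exp} in the form $\sss=(\II+\LL)\zz$, so that $\zz-\sss=-\LL\zz$ and hence $I=\norm{\LL\zz}_2^2=\zz^\top\LL^2\zz$. Polarization is the most direct: $P=\frac1n\sum_i\zz_i^2=\frac1n\norm{\zz}_2^2=\frac1n\zz^\top\zz$.

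The final step in every line is to push through the substitution $\zz=(\II+\LL)^{-1}\sss$, which uses that $\II+\LL$ is invertible — guaranteed because $\LL\succeq\OO$ implies $\II+\LL\succeq\II\succ\OO$ (this is exactly $\II\preceq\II+\LL$ in Fact~\ref{fact}). Substituting $\zz$ converts each $\zz^\top\MM\zz$ into $\sss^\top(\II+\LL)^{-1}\MM(\II+\LL)^{-1}\sss$, and the corresponding norm $\norm{\BB\zz}_2^2$ into $\norm{\BB(\II+\LL)^{-1}\sss}_2^2$, matching the stated right-hand sides (and likewise for $\BB^+$, $\BB^-$, $\LL$ in place of $\BB$, and for the $\LL(\II+\LL)^{-1}\sss$ and $(\II+\LL)^{-1}\sss$ expressions in the $I$ and $P$ lines, using the symmetry of $(\II+\LL)^{-1}$).

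Honestly, I expect no serious obstacle here — the lemma is essentially bookkeeping once the incidence-matrix factorization is in hand, and the symmetry and invertibility of $(\II+\LL)^{-1}$ make the substitution step routine. The only point demanding genuine care is the sign bookkeeping in the incidence rows: one must confirm that the convention $\BB_{e,v}=-w(u,v)$ produces the $-\AA_{i,j}$ coefficient with the correct sign in all three of $\BB$, $\BB^+$, $\BB^-$, so that the squared entries reproduce the $(\zz_i-\AA_{i,j}\zz_j)^2$, $(\zz_i-\zz_j)^2$, and $(\zz_i+\zz_j)^2$ summands respectively. Getting that sign convention right, and confirming it is orientation-independent (squaring removes the ambiguity from the arbitrary edge orientation), is the one place where a careless reader could slip.
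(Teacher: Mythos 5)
Your proof is correct and follows exactly the intended route — the paper itself states Lemma~\ref{lem:l2} without proof, and your derivation (incidence-matrix factorizations $\LL=\BB^\top\BB$, $\LL^\pm=(\BB^\pm)^\top\BB^\pm$, the identity $\zz-\sss=-\LL\zz$ from $\sss=(\II+\LL)\zz$, and substitution of $\zz=(\II+\LL)^{-1}\sss$) is the natural bookkeeping the authors leave implicit. Your attention to the sign convention $\BB_{e,v}=-w(u,v)$ and its orientation-independence after squaring is exactly the one point worth checking, and you got it right.
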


It is easy to verify that these quantities satisfy the following conservation law:
\begin{align*}
    I(\calG,\sss)+2D(\calG,\sss)+n P(\calG,\sss) &=\sum_{i=1}^n \sss_i^2, \\
     F(\calG,\sss)+ E(\calG,\sss) &= D(\calG,\sss),
\end{align*}
which extends the result in~\cite{MuMuTs18} for  unsigned graphs to  signed graphs.


\subsection{Signed Laplacian Solver}

As shown in Lemma~\ref{lem:l2}, directly calculating the social phenomena concerned involves inverting matrix $\II+\LL$, which takes $O(n^3)$ time and is impractical for large graphs. Although some efficient techniques have been proposed for unsigned graphs~\cite{XuBaZh21,XuZhGuZhZh22}, they cannot apply to signed graphs in a straightforward way. 

In this subsection, we first establish a connection between the absorbing probabilities for random walks on a signed graph and the absorbing probabilities for random walks on an unsigned graph, which is associated with the signed graph. Then, based on this connection we present a signed Laplacian solver, which allows for a fast approximation of relevant social phenomena with proven error guarantees.

For a signed graph $\calG=(V,E,w)$ with $n$ nodes and $m$ edges, we can define an unsigned graph $\hat{\calG}=(\hat{V},\hat{E})$ with $2n$ nodes and $2m$ edges. For the associated unsigned graph $\hat{\calG}$, we label its $2n$ nodes as $1,2,\ldots,2n$. The edge set $\hat{E}$ of $\hat{\calG}$ is constructed as follows. If there is a positive edge $(i,j)\in E$, then there are two edges $(i,j)$ and $(i+n,j+n)$ in $\hat{E}$. If there is a negative edge $(i,j)\in E$, then there are two edges $(i,j+n)$ and $ (i+n,j)$ in $\hat{E}$.

For the unsigned graph $\hat{\calG}=(\hat{V},\hat{E})$, we can define its augmented unsigned graph $\hat{\calH}=(\hat{X},\hat{R})$ with absorbing states. The construction details of graph $\hat{\calH}=(\hat{X},\hat{R})$ are as follows.
\begin{enumerate}[leftmargin=12pt,topsep=6pt]
    \item The node set $\hat{X}$ is defined as $\hat{X}=\hat{V} \cup \tilde{V}$, where $\tilde{V}$ is a set of $2n$ nodes such that for each node $i\in \hat{V}$, there is a copy node $\eta(i) \in \tilde{V}$ of node $i$;
    \item The edge set $\hat{R}$ includes all the edges $\hat{E}$ of $\hat{\calG}$, plus a new set of edges between each node $i\in \hat{V}$ and the copy node $\eta(i) \in \tilde{V}$ of node $i$. That is, $R=\hat{E} \cup \tilde{E}$, where $\tilde{E}=\{(i,\eta(i))| i\in \hat{V} \}$.
\end{enumerate}

We now define an absorbing random walk on the augmented graph $\hat{\calH}$ with $4n$ nodes, where the $2n$ nodes in $\hat{V}$ are transient nodes and the $2n$ nodes in $\tilde{V}$ are absorbing nodes. Then, the transition matrix $\PP'$ of this absorbing random walk is written as $\PP'=$
\begin{small}
\begin{equation*}
    \begin{bmatrix}
        (\II+\DD)^{-1} \AA^+ &  -(\II+\DD)^{-1} \AA^-& (\II+\DD)^{-1} & \OO\\
        -(\II+\DD)^{-1} \AA^- &  (\II+\DD)^{-1} \AA^+& \OO&(\II+\DD)^{-1} \\
         \OO &\OO & \II &\OO \\
         \OO&\OO & \OO& \II
    \end{bmatrix},
\end{equation*}
\end{small}

Define matrix $\QQ$ as $\QQ=\begin{bmatrix}
        (\II+\DD)^{-1} \AA^+ &  -(\II+\DD)^{-1} \AA^-\\
        -(\II+\DD)^{-1} \AA^- &  (\II+\DD)^{-1} \AA^+
    \end{bmatrix}$, and define matrix $\TT$ as $\TT=\begin{bmatrix}
         (\II+\DD)^{-1} & \OO\\
        \OO&(\II+\DD)^{-1}
    \end{bmatrix}$.
For a transient state $i\in \hat{V}$ and an absorbing state $\eta(j)\in \tilde{V}$, let $p'_{i,\eta(j)}$ be the absorbing probability of a random walk starting from $i$ being absorbed by $\eta(j)$. Define a $2n \times 2n$ matrix $\SS$ as
\begin{equation}\label{equ:2n2n}
    \SS=\begin{bmatrix}
  \II+ \DD- \AA^+ & \AA^- \\
    \AA^- & \II+\DD-\AA^+
\end{bmatrix}.
\end{equation}
Clearly, $\SS$ is symmetric, diagonally-dominant (SDD), since it can be expressed as $\SS=\II+\LL(\hat{\calG})$, where $\LL(\hat{\calG})$ is the Laplacian matrix of the unsigned graph $\hat{\calG}$. Then, $p'_{i,\eta(j)}$ can be represented as  $p'_{i,\eta(j)}=\ee_i^\top (\II-\QQ)^{-1}\TT\ee_j=\ee_i^\top \SS^{-1}\ee_j$.

Since $\SS$ is a block matrix, according to the matrix inversion in block form, we can further derive an expression of $p'_{i,\eta(j)}$ by distinguishing four cases: (i) $i \in \{1,2,\ldots,n\}$ and  $ j\in \{1,2,\ldots,n\}$; (ii) $i \in \{1,2,\ldots,n\}$ and  $j \in \{n+1,n+2,\ldots,2n\}$; (iii) $i \in \{n+1,n+2,\ldots,2n\}$ and $ j\in \{1,2,\ldots,n\}$; and (iv) $i \in \{n+1,n+2,\ldots,2n\}$ and $j \in \{n+1,n+2,\ldots,2n\}$. For the first case $i,j\in \{1,2,\ldots,n\}$, we have $p'_{i,\eta(j)}=\frac{1}{2}\ee_i^\top \left((\II+\DD-\AA^++\AA^-)^{-1}+(\II+\DD-\AA^+-\AA^-)^{-1}\right)\ee_j$. For the second case $i\in \{1,2,\ldots,n\}$ and $j \in \{n+1,n+2,\ldots,2n\}$, we have $p'_{i,\eta(j)}=\frac{1}{2}\ee_i^\top \left((\II+\DD-\AA^++\AA^-)^{-1}-(\II+\DD-\AA^+-\AA^-)^{-1}\right)\ee_{j-n}$. For the remaining two cases, considering the symmetry of the matrix $\SS$, it is easy to verify that $p'_{i+n,\eta(j)}=p'_{i,\eta(j+n)}$ and $p'_{i+n,\eta(j+n)}=p'_{i,\eta(j)}$ hold for $i,j\in \{1,2,\ldots,n\}$.

Theorem~\ref{thm:randomwalk} shows that in order to determine the expressed opinion $\zz_i$ of node $i$, we can alternatively compute the absorbing probabilities $p_{i,\sigma(j)}$ and $q_{i,\sigma(j)}$ of positive and negative random walk on the augmented signed graph $\calH$ of the signed graph $\calG$. According to the above arguments, we establish a direct relationship between the absorbing probabilities for absorbing random walks on signed graph $\calH$ and unsigned  graph $\hat{\calH}$. Specifically, for any $i,j\in \{1,2,\ldots,n\}$, $p'_{i,\eta(j)}=p'_{i+n,\eta(j+n)}=p_{i,\sigma(j)}$ and $p'_{i,\eta(j+n)}=p'_{i+n,\eta(j)}=q_{i,\sigma(j)}$. These relations allow us to   provide an alternative expression for the equilibrium expressed opinion for the FJ model on a signed graph, by using a matrix associated with an unsigned graph.

\begin{remark}
It should be mentioned that Hendrickx~\cite{He14} introduced a lifting approach, which implicitly establishes a relationship for opinions  between the continuous-time DeGroot model on a signed graph and the DeGroot model on a corresponding unsigned graph with  twice the number of nodes in the signed graph. We consider the discrete-time  FJ model on signed graphs. Although  our work looks somewhat similar to 
the previous work~\cite{He14}, they differ in at least several crucial ways. Firstly, discrete-time algorithms are more suitable for practical implementation because continuous-time algorithms need infinite information transmission rate~\cite{VaTrAn21}. Secondly, since the  FJ model is an extension of  the DeGroot model~\cite{MaAb19}, our result is more general than existing work~\cite{He14}. Finally, our
technique and algorithm presented below are also applicable to the discrete-time DeGroot model on signed graphs. Thus, our work offers a slightly more general framework for studying discrete-time opinion dynamics on signed graphs.
\end{remark}

\begin{theorem}\label{thm:lft}
For the FJ model of opinion dynamics on a signed graph $\calG$ with an initial opinion vector $\sss$, the equilibrium opinion vector can be expressed as
\begin{small}
\begin{equation}\label{equ:new_z}
    \zz = \frac{1}{2}
\begin{bmatrix}
\II &-\II
\end{bmatrix}
\SS^{-1}
\begin{bmatrix}
\II \\ -\II
\end{bmatrix} \sss,
\end{equation}
  \end{small}

where matrix $\SS$ is defined by~\eqref{equ:2n2n}, which is the fundamental matrix for an unsigned graph $\hat{\calG}$ associated with the signed graph $\calG$.
\end{theorem}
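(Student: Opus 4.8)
The plan is to reduce the claimed identity to the closed form $\zz=(\II+\LL)^{-1}\sss$ already established in~\eqref{equ:exp}. Since $\sss$ is arbitrary, it suffices to prove the matrix identity
\[
\frac{1}{2}\begin{bmatrix}\II & -\II\end{bmatrix}\SS^{-1}\begin{bmatrix}\II \\ -\II\end{bmatrix} = (\II+\LL)^{-1}.
\]
Note that $\SS=\II+\LL(\hat{\calG})\succeq\II$ is invertible, so the right-hand inverse exists. The whole argument then hinges on exploiting the block-symmetric structure of $\SS$.

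First I would write $\SS=\begin{bmatrix}\MM & \NN \\ \NN & \MM\end{bmatrix}$ with $\MM=\II+\DD-\AA^+$ and $\NN=\AA^-$, and record the key algebraic fact that, because $\LL=\DD-\AA^+-\AA^-$, the difference of the diagonal and off-diagonal blocks is exactly $\MM-\NN=\II+\DD-\AA^+-\AA^-=\II+\LL$, while their sum is $\MM+\NN=\II+\DD-\AA^++\AA^-$.

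Next I would diagonalize this block structure. The orthogonal matrix $\UU=\tfrac{1}{\sqrt 2}\begin{bmatrix}\II & \II \\ \II & -\II\end{bmatrix}$ satisfies $\UU^\top\UU=\II$ and conjugates $\SS$ into block-diagonal form,
\[
\UU^\top \SS\, \UU=\begin{bmatrix}\MM+\NN & \OO \\ \OO & \MM-\NN\end{bmatrix},
\]
so that $\SS^{-1}=\UU\begin{bmatrix}(\MM+\NN)^{-1} & \OO \\ \OO & (\MM-\NN)^{-1}\end{bmatrix}\UU^\top$. I would then observe that the two flanking vectors are eigen-directions of this transform: $\begin{bmatrix}\II \\ -\II\end{bmatrix}=\sqrt2\,\UU\begin{bmatrix}\OO \\ \II\end{bmatrix}$ and symmetrically $\begin{bmatrix}\II & -\II\end{bmatrix}=\sqrt2\,\begin{bmatrix}\OO & \II\end{bmatrix}\UU^\top$. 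Substituting and using $\UU^\top\UU=\II$ collapses the product, leaving only the lower-right block $(\MM-\NN)^{-1}=(\II+\LL)^{-1}$, with the scalar prefactors $\tfrac12\cdot\sqrt2\cdot\sqrt2=1$ combining to unity. This yields the desired identity.

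There is essentially no hard obstacle here: the computation is routine once the block-diagonalization is in place. The only point requiring care is bookkeeping the scalar factors and the orientation of the selector vectors, so that exactly the $(\MM-\NN)^{-1}$ block survives rather than the $(\MM+\NN)^{-1}$ block; choosing the antisymmetric vector $\begin{bmatrix}\II \\ -\II\end{bmatrix}$ is precisely what projects onto the $\MM-\NN$ eigenspace. Alternatively, I could bypass the diagonalization entirely and substitute the block entries of $\SS^{-1}$ already derived in the four-case computation above, writing $\SS^{-1}=\begin{bmatrix}\PP_1 & \PP_2 \\ \PP_2 & \PP_1\end{bmatrix}$ with $\PP_1=\tfrac12\bigl((\MM+\NN)^{-1}+(\MM-\NN)^{-1}\bigr)$ and $\PP_2=\tfrac12\bigl((\MM+\NN)^{-1}-(\MM-\NN)^{-1}\bigr)$; then the middle expression reduces to $\PP_1-\PP_2=(\MM-\NN)^{-1}=(\II+\LL)^{-1}$ by a one-line cancellation, giving an independent check of the same result.
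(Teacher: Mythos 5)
Your proof is correct, but it takes a more direct route than the paper. The paper never verifies the identity by pure matrix algebra: it arrives at Theorem~\ref{thm:lft} through the probabilistic machinery of the surrounding section, namely (i) Theorem~\ref{thm:randomwalk}, which writes $\zz_i=\sum_{j}(p_{i,\sigma(j)}-q_{i,\sigma(j)})\sss_j$, (ii) the formula $p'_{i,\eta(j)}=\ee_i^\top\SS^{-1}\ee_j$ for absorbing probabilities on the augmented unsigned graph $\hat{\calH}$, and (iii) the four-case block computation identifying $p'_{i,\eta(j)}=p_{i,\sigma(j)}$ and $p'_{i,\eta(j+n)}=q_{i,\sigma(j)}$, from which the symmetric combination $\frac{1}{2}\bigl[\II\ {-\II}\bigr]\SS^{-1}\bigl[\begin{smallmatrix}\II\\-\II\end{smallmatrix}\bigr]$ picks out exactly $p_{i,\sigma(j)}-q_{i,\sigma(j)}$. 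You instead reduce everything to the one-line matrix identity $\frac{1}{2}\bigl[\II\ {-\II}\bigr]\SS^{-1}\bigl[\begin{smallmatrix}\II\\-\II\end{smallmatrix}\bigr]=(\MM-\NN)^{-1}=(\II+\LL)^{-1}$ via the orthogonal block-diagonalization $\UU^\top\SS\,\UU=\mathrm{diag}(\MM+\NN,\MM-\NN)$, then invoke $\zz=(\II+\LL)^{-1}\sss$ from~\eqref{equ:exp}. Your computation checks out: $\MM-\NN=\II+\DD-\AA^+-\AA^-=\II+\LL$ since $\AA=\AA^++\AA^-$, the selector vectors are indeed the $\sqrt{2}$-scaled images of $\bigl[\begin{smallmatrix}\OO\\ \II\end{smallmatrix}\bigr]$ under $\UU$, and both diagonal blocks inherit invertibility from $\SS\succeq\II$. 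Your argument is shorter, self-contained, and does not depend on the correctness of the random-walk correspondence; what it gives up is the interpretive content, since the paper's derivation is precisely what justifies reading the entries of $\SS^{-1}$ as absorbing probabilities on the lifted unsigned graph. Your closing alternative (reading off $\PP_1-\PP_2$ from the block entries of $\SS^{-1}$) is essentially the paper's own calculation, so you have in effect supplied both routes.
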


Theorem~\ref{thm:lft} shows that to determine the equilibrium opinion $\zz $ for the FJ model on a signed graph $\calG$, we can alternatively evaluate $\SS^{-1} \begin{bmatrix} \sss \\ -\sss\end{bmatrix}$ denoted by $\sss'$, where $\SS$ is an SDD matrix, which in fact corresponds to a fundamental matrix for the FJ model on an unsigned graph $\hat{\calG}$, expanded from $\calG$. In order to compute $\sss'$, we resort  to the fast SDD linear system solver~\cite{SpTe14,CoKyMiPaJaPeRaXu14}. Specifically, we propose a signed solver by extending the SDD solver to the FJ model on signed graphs, which avoids computing the inverse of $\SS$ or $\II+\LL$, but has proven error guarantees for various problems defined on signed graphs. Before doing so, we first introduce the SDD solver.


\begin{lemma}\label{lem:solver}~\cite{SpTe14,CoKyMiPaJaPeRaXu14}
Given a symmetric positive semi-definite matrix $\KK \in \mathbb{R}^{n \times n}$ with $m$ nonzero entries, a vector $\bb \in \mathbb{R}^n$, and an error parameter $\delta > 0$, there exists a solver, denoted as $\aaa = \SDDMSolver(\KK, \bb, \delta)$, which returns a vector $\aaa \in \mathbb{R}^n$ such that $\norm{\aaa - \KK^{-1} \bb}_{\KK} \leq \delta \norm{\KK^{-1} \bb}_{\KK}$. The runtime of this solver is expected to be $\tilde{O}(m)$, where $\tilde{O}(\cdot)$ notation suppresses ${\rm poly}(\log n)$ factors.
\end{lemma}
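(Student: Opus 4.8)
The plan is to observe that Lemma~\ref{lem:solver} is not original to this paper but is precisely the nearly-linear time solver for symmetric diagonally dominant (SDD) linear systems established in \cite{SpTe14, CoKyMiPaJaPeRaXu14}; accordingly I would invoke it as a black box rather than reprove it. For completeness, here is how the underlying argument is structured and where the real difficulty lies.

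First I would reduce the given system $\KK \xx = \bb$ to a graph Laplacian system. A standard gadget reduction turns any SDD matrix (in particular the matrices $\II + \LL$ and $\SS$ that arise in this paper) into the Laplacian of a weighted graph on at most $2n$ vertices and $O(m)$ edges, while translating the required $\KK$-norm guarantee into the corresponding energy-norm guarantee for the Laplacian. Thus it suffices to solve a Laplacian system $\LL \xx = \bb$ to relative error $\delta$ in the energy norm in expected $\tilde{O}(m)$ time.

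Second I would attack the Laplacian system with a preconditioned iterative method, such as preconditioned Chebyshev or Richardson iteration. The number of iterations needed to reach relative energy-norm error $\delta$ scales like $\sqrt{\kappa}\,\log(1/\delta)$, where $\kappa$ is the relative condition number of $\LL$ against a preconditioner $\MM$. Hence the entire runtime reduces to producing a preconditioner $\MM$ that is simultaneously spectrally close to $\LL$ and cheap to apply, and then charging each iteration against the cost of applying $\MM^{-1}$ together with a matrix-vector product with $\LL$, which is $O(m)$.

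The hard part will be constructing this preconditioner within the nearly-linear budget, and this is exactly the content of the cited work. The key tools are spectral sparsification and recursive preconditioning: sampling edges by (approximate) effective resistance yields a sparsifier $\widetilde{\LL}$ with $\tilde{O}(n)$ edges satisfying $(1-\epsilon)\LL \preceq \widetilde{\LL} \preceq (1+\epsilon)\LL$, and iterating this construction builds a chain of progressively smaller graphs whose coarsest level can be solved directly. The central obstacle is a bootstrapping circularity, since estimating effective resistances ostensibly already requires a solver; \cite{SpTe14} resolves this combinatorially via low-stretch spanning trees with incrementally added off-tree edges, while \cite{CoKyMiPaJaPeRaXu14} resolves it through a carefully controlled resistance-sampling recursion. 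One then proves that the per-level work sums to $\tilde{O}(m)$ and that the accumulated per-level approximation error stays within the $\delta$ tolerance in the energy norm. Since reproving this preconditioner chain is far beyond what is needed here, I would cite the result directly.
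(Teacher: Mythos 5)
The paper gives no proof of this lemma at all—it is stated purely as a citation of the SDD solvers from the referenced works—so your decision to invoke it as a black box is exactly what the paper does, and your sketch of the underlying preconditioning/sparsification machinery is accurate but not required. The proposal is correct and takes essentially the same approach as the paper.
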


Note that the SDD solver in Lemma~\ref{lem:solver}  has been previously applied to solve many problems for opinion dynamics~\cite{ZhZh21,XuBaZh21,ZhBaZh21}. This solver allows us to provide a detailed analysis for the signed Laplacian solver  described below, and has an error guarantee. In addition, as deep learning evolves, some other solvers based on gradient propagation have been also utilized in the study on opinion dynamics~\cite{CiGiBo23,MaMiTaTa21}.


\begin{lemma}\label{lem:signsolver}
Given a signed graph $\calG=(V,E,w)$ with Laplacian matrix $\LL$, a matrix $\SS$ defined in~\eqref{equ:2n2n}, a vector $\yy\in \mathbb{R}^n$, and an error parameter $\delta>0$, there is a signed Laplacian solver $\ff=\textsc{SignedSolver}(\II+\LL,\yy,\delta)=\frac{1}{2}\begin{bmatrix} \II &-\II\end{bmatrix}\cdot \SDDMSolver\left(\SS, \begin{bmatrix}\II \\ -\II\end{bmatrix}\yy, \delta\right)$, which returns a vector $\ff$ satisfying
\begin{equation}\label{equ:appIL}
    \norm{\ff - (\II+\LL)^{-1} \yy}_{\II+\LL} \leq \delta \norm{(\II+\LL)^{-1} \yy}_{\II+\LL}.
\end{equation}
This signed Laplacian solver runs in expected time $\tilde{O}(m)$, where $\tilde{O}(\cdot)$ notation suppresses the ${\rm poly} (\log n)$ factors.
\end{lemma}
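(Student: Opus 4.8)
The plan is to treat $\SDDMSolver$ as the black box of Lemma~\ref{lem:solver} and to transport its $\SS$-norm error guarantee onto the required $(\II+\LL)$-norm guarantee for $\ff$, exploiting the lifting established in Theorem~\ref{thm:lft}. Write $\JJ=\begin{bmatrix}\II\\-\II\end{bmatrix}$, so that $\ff=\tfrac12\JJ^\top\aaa$ with $\aaa=\SDDMSolver(\SS,\JJ\yy,\delta)$. The backbone of the argument is the intertwining identity $\SS\JJ=\JJ(\II+\LL)$, which I would verify by a single block multiplication using $\SS=\II+\LL(\hat{\calG})$, the equal diagonal blocks $\II+\DD-\AA^+$ and off-diagonal blocks $\AA^-$, together with $\II+\LL=\II+\DD-\AA^+-\AA^-$.

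From this relation I would extract three facts. First, $\SS^{-1}\JJ=\JJ(\II+\LL)^{-1}$, so the exact lifted solution $\xx^\star:=\SS^{-1}\JJ\yy$ equals $\JJ(\II+\LL)^{-1}\yy$ and hence lies in the range of $\JJ$; combined with $\JJ^\top\JJ=2\II$ this confirms $\tfrac12\JJ^\top\xx^\star=(\II+\LL)^{-1}\yy$, i.e. the solver is exact whenever $\SDDMSolver$ is. Second, $\JJ^\top\SS\JJ=\JJ^\top\JJ(\II+\LL)=2(\II+\LL)$, which yields the source-norm comparison $\norm{\xx^\star}_{\SS}^2=2\norm{(\II+\LL)^{-1}\yy}_{\II+\LL}^2$. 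Third, setting $\rr:=\aaa-\xx^\star$ for the solver's error (for which Lemma~\ref{lem:solver} gives $\norm{\rr}_{\SS}\le\delta\norm{\xx^\star}_{\SS}$), linearity gives $\ff-(\II+\LL)^{-1}\yy=\tfrac12\JJ^\top\rr$, so the whole problem reduces to comparing $\norm{\tfrac12\JJ^\top\rr}_{\II+\LL}$ with $\norm{\rr}_{\SS}$.

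The crux, and the step I expect to be the main obstacle, is this norm transfer. Expanding and invoking the intertwining identity once more gives $\norm{\tfrac12\JJ^\top\rr}_{\II+\LL}^2=\tfrac14\rr^\top\JJ(\II+\LL)\JJ^\top\rr=\tfrac14\rr^\top\SS\JJ\JJ^\top\rr$, so I need the operator inequality $\SS\JJ\JJ^\top\preceq2\SS$. I would establish it in one of two ways. The structural route observes that $\tfrac12\JJ\JJ^\top$ is the orthogonal projection onto the range of $\JJ$ and commutes with $\SS$ (both are invariant under swapping the two copies), whence $\SS\JJ\JJ^\top=2\,\SS^{1/2}\big(\tfrac12\JJ\JJ^\top\big)\SS^{1/2}\preceq2\SS$ because $\zeov\preceq\tfrac12\JJ\JJ^\top\preceq\II$. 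The elementary route computes $2\SS-\SS\JJ\JJ^\top=\begin{bmatrix}\CC&\CC\\\CC&\CC\end{bmatrix}$ with $\CC=\II+\DD-\AA^++\AA^-=\II+\LL^++(\DD^-+\AA^-)$, which is positive definite as the sum of $\II$, the positive-subgraph Laplacian $\LL^+$, and the ordinary unsigned Laplacian $\DD^-+\AA^-$ of the negative subgraph, and therefore $\begin{bmatrix}\CC&\CC\\\CC&\CC\end{bmatrix}=\begin{bmatrix}\II\\\II\end{bmatrix}\CC\begin{bmatrix}\II&\II\end{bmatrix}\succeq\OO$. Either way, $\norm{\tfrac12\JJ^\top\rr}_{\II+\LL}\le\tfrac{1}{\sqrt2}\norm{\rr}_{\SS}$.

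Chaining the three estimates closes the proof with the constants cancelling exactly: $\norm{\ff-(\II+\LL)^{-1}\yy}_{\II+\LL}\le\tfrac{1}{\sqrt2}\norm{\rr}_{\SS}\le\tfrac{\delta}{\sqrt2}\norm{\xx^\star}_{\SS}=\tfrac{\delta}{\sqrt2}\cdot\sqrt2\,\norm{(\II+\LL)^{-1}\yy}_{\II+\LL}=\delta\norm{(\II+\LL)^{-1}\yy}_{\II+\LL}$, which is precisely~\eqref{equ:appIL}. For the runtime, $\SS$ is a $2n\times2n$ SDD matrix with $O(m)$ nonzero entries, since the associated unsigned graph $\hat{\calG}$ has $2m$ edges; hence by Lemma~\ref{lem:solver} the call to $\SDDMSolver$ runs in expected time $\tilde{O}(m)$, while the pre- and post-multiplications by $\JJ$ and $\JJ^\top$ cost $O(n)$, for a total of $\tilde{O}(m)$.
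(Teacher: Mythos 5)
Your proposal is correct and follows essentially the same route as the paper's proof: both rest on the lifting identity relating $\SS$ to $\II+\LL$ through $\begin{bmatrix}\II & -\II\end{bmatrix}$, invoke the $\SDDMSolver$ guarantee on the lifted system, and reduce everything to the positive semidefiniteness of $2\SS-\SS\JJ\JJ^\top=\begin{bmatrix}\MM & \MM\\ \MM & \MM\end{bmatrix}$ with $\MM=\II+\DD-\AA^++\AA^-$, which is exactly the matrix the paper exhibits. The only difference is organizational: you package the argument as a clean norm-transfer chain via the intertwining relation $\SS\JJ=\JJ(\II+\LL)$, whereas the paper expands the target inequality into a single quadratic form and manipulates it directly.
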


\begin{proof}
In order to prove~\eqref{equ:appIL}, it is equivalent to prove
\begin{equation*}
\ff^\top (\II+\LL) \ff + \yy^\top (\II+\LL)^{-1} \yy - 2\ff^\top\yy \leq \delta^2 \yy^\top (\II+\LL)^{-1}\yy.
\end{equation*}
Let $\cc=\SDDMSolver\left(\SS, \begin{bmatrix}\II \\ -\II\end{bmatrix}\yy, \delta\right)$. Eliminating $\ff$ and $\II+\LL$ by $\cc$ and $\SS$ yields
\begin{align}\label{equ:toprove}
&\frac{1}{8}\cc^\top \begin{bmatrix}\II \\ -\II\end{bmatrix}\begin{bmatrix}\II &-\II\end{bmatrix}\SS\begin{bmatrix}\II \\ -\II\end{bmatrix}\begin{bmatrix}\II & -\II\end{bmatrix}\cc \notag\\
&+\frac{1-\delta^2}{2}\yy^\top \begin{bmatrix}\II &-\II\end{bmatrix} \SS^{-1}\begin{bmatrix}\II \\ -\II\end{bmatrix} \yy - \cc^\top \begin{bmatrix}\II \\ -\II\end{bmatrix} \yy \leq 0.
\end{align}
By definition of $\cc$, to prove~\eqref{equ:toprove}, we only need to prove
\begin{equation*}
\frac{1}{8} \cc^\top \begin{bmatrix}\II \\ -\II\end{bmatrix}\begin{bmatrix}\II &-\II\end{bmatrix}\SS\begin{bmatrix}\II \\ -\II\end{bmatrix}\begin{bmatrix}\II  &-\II\end{bmatrix}\cc - \frac{1}{2}\cc^\top \SS \cc \leq 0,
\end{equation*}
which can be recast as
\begin{equation}\label{equ:semiposi}
\cc^\top \left(4\SS -  \begin{bmatrix} \II & -\II\\ -\II & \II \end{bmatrix} \SS \begin{bmatrix} \II & -\II\\ -\II & \II \end{bmatrix}\right) \cc \geq 0.
\end{equation}
Plugging the expression for matrix $\SS$ in~\eqref{equ:2n2n} into~\eqref{equ:semiposi} gives the following equation:
\begin{equation*}
4\SS - \begin{bmatrix} \II & -\II\\ -\II & \II \end{bmatrix} \SS \begin{bmatrix} \II & -\II\\ -\II & \II \end{bmatrix} = \begin{bmatrix} 2\MM & 2\MM \\ 2\MM & 2\MM \end{bmatrix},
\end{equation*}
where $\MM = \II+\DD-\AA^++\AA^-$ is an SDDM matrix.
Rewrite  $\cc$ as  $\cc = \begin{bmatrix} \uu \\ \vvv\end{bmatrix}$. Then~\eqref{equ:semiposi} is
expressed as
\begin{equation*}
\uu^\top \MM \uu +\vvv^\top \MM \vvv +2 \uu^\top \MM\vvv= (\uu+\vvv)^\top \MM (\uu+\vvv)\geq 0,
\end{equation*}
which is true since matrix $\MM$ is a positive definite matrix.
Combining the above analyses completes the proof.
\end{proof}

\subsection{Fast Evaluation Algorithm}


Lemma~\ref{lem:signsolver} indicates that for those quantities on signed graphs having form $(\II+\LL)^{-1}\sss$, we can exploit the signed solver to significantly reduce the computational time. We next apply Lemma~\ref{lem:signsolver} to obtain approximations for the quantities defined in Section~\ref{Subsec:Meas}.
\begin{lemma}\label{lm1}
Given a signed graph $\calG=(V,E,w)$ with Laplacian matrix $\LL$, incident matrix $\BB$,  positive incident matrix $\BB^+$,  negative incident matrix $\BB^-$, and a  parameter $\epsilon \in (0, \frac{1}{2})$,  consider  the FJ model of opinion dynamics on $\calG=(V,E,w)$ with  the internal opinion vector $\sss$ and let $\qq=\textsc{SignedSolver}(\II+\LL,\sss,\delta)$, where
	\begin{align*}
	\delta \leq \min\left\{\frac{\epsilon}{3\sqrt{2n}}, \frac{\norm{\sss}_{\LL}}{6\sqrt{2}n^2}\epsilon,\frac{\norm{\LL\sss}_2}{12\sqrt{2}n^3}\epsilon,\frac{\norm{\sss}_2}{2\sqrt{2}n}\epsilon\right\}.
	\end{align*}
Then, the following relations hold:
	\begin{align}
\label{equ:1}	& \norm{(\II+\LL)^{-1}\sss}^2_2 \approx_{\epsilon} \norm{\qq}^2_2,\\
\label{equ:2}	& \norm{\BB(\II+\LL)^{-1}\sss}^2_2 \approx_{\epsilon} \norm{\BB\qq}^2_2,\\
  \label{equ:3}      &\norm{\LL(\II+\LL)^{-1}\sss}^2_2 \approx_{\epsilon} \norm{\LL\qq}^2_2,\\
 \label{equ:4}       &\len{\norm{ \BB^+\qq}_2^2 - \norm{ \BB^+(\II+\LL)^{-1}\sss}_2^2}\leq \epsilon,\\
 \label{equ:5}       &\len{\norm{ \BB^-\qq}_2^2 - \norm{ \BB^-(\II+\LL)^{-1}\sss}_2^2}\leq \epsilon.
	\end{align}
\end{lemma}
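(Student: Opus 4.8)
The plan is to treat $\qq$ as a perturbation of the exact vector $\zz=(\II+\LL)^{-1}\sss$ and to push the solver's error guarantee through each of the five quantities. Writing $\ww=\qq-\zz$, Lemma~\ref{lem:signsolver} applied with $\yy=\sss$ gives $\norm{\ww}_{\II+\LL}\le\delta\norm{\zz}_{\II+\LL}$, and since $\norm{\zz}_{\II+\LL}^2=\sss^\top(\II+\LL)^{-1}\sss\le\norm{\sss}_2^2$ (because $(\II+\LL)^{-1}\preceq\II$, from $\II\preceq\II+\LL$ in Fact~\ref{fact}) I get the master bound $\norm{\ww}_{\II+\LL}\le\delta\norm{\sss}_2$, with $\norm{\sss}_2\le\sqrt n$ because $\sss\in[-1,1]^n$. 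First I would convert this single bound into a bound on $\norm{\MM\ww}_2$ for each operator $\MM\in\{\II,\BB,\LL,\BB^+,\BB^-\}$ using the semidefinite orderings of Fact~\ref{fact}: from $\LL^\pm\preceq\LL\preceq\II+\LL$ one gets $\norm{\BB^\pm\ww}_2,\norm{\BB\ww}_2,\norm{\ww}_2\le\norm{\ww}_{\II+\LL}\le\delta\norm{\sss}_2$ (using $\norm{\BB^\pm\ww}_2^2=\ww^\top\LL^\pm\ww$ and $\LL^\pm=(\BB^\pm)^\top\BB^\pm$), while $\LL\preceq2n\II$ yields $\LL^2\preceq2n\LL$ and hence $\norm{\LL\ww}_2\le\sqrt{2n}\,\norm{\ww}_{\II+\LL}\le\sqrt{2n}\,\delta\norm{\sss}_2$.

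For every relation I would then reduce to the elementary inequality $\big|\norm{\MM\qq}_2^2-\norm{\MM\zz}_2^2\big|\le\norm{\MM\ww}_2\big(\norm{\MM\ww}_2+2\norm{\MM\zz}_2\big)$, which follows from the triangle inequality $\big|\norm{\MM\qq}_2-\norm{\MM\zz}_2\big|\le\norm{\MM\ww}_2$. For the three relative statements \eqref{equ:1}--\eqref{equ:3} the crux is to keep the denominator $\norm{\MM\zz}_2$ from degenerating, so the next step is the spectral lower bounds $\norm{\zz}_2\ge\tfrac{1}{2n}\norm{\sss}_2$, $\norm{\zz}_\LL\ge\tfrac{1}{2n}\norm{\sss}_\LL$ and $\norm{\LL\zz}_2\ge\tfrac{1}{2n}\norm{\LL\sss}_2$. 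Each follows by diagonalising $\LL$: writing $\sss=\sum_i c_i\vvv_i$ with $\LL\vvv_i=\lambda_i\vvv_i$, the ratio between the $\zz$- and $\sss$-quadratic forms in each case is $(1+\lambda_i)^{-2}$, which is at least $1/(4n^2)$ because $\II+\LL\preceq2n\II$ forces $1+\lambda_i\le2n$. Combining each lower bound with the matching $\norm{\MM\ww}_2$ estimate shows that the appropriate entry of the $\min$ makes $\norm{\MM\ww}_2\le\tfrac{\epsilon}{3}\norm{\MM\zz}_2$, after which the elementary inequality together with $\epsilon<\tfrac12$ gives $\big|\norm{\MM\qq}_2^2-\norm{\MM\zz}_2^2\big|\le\epsilon\norm{\MM\zz}_2^2$, i.e.\ the stated $\approx_\epsilon$. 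Concretely, the first entry of the $\min$ handles \eqref{equ:1} (here I would use the sharper route $\norm{\ww}_2\le\delta\norm{\zz}_{\II+\LL}\le\sqrt{2n}\,\delta\norm{\zz}_2$ rather than routing through $\norm{\sss}_2$), the second handles \eqref{equ:2}, and the third handles \eqref{equ:3}, each after bounding $\norm{\sss}_2\le\sqrt2\,n$.

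For the two absolute statements \eqref{equ:4}--\eqref{equ:5}, where $\norm{\BB^\pm\zz}_2$ can be tiny and a relative bound is hopeless, I would instead expand $\norm{\BB^\pm\qq}_2^2-\norm{\BB^\pm\zz}_2^2=2\zz^\top\LL^\pm\ww+\ww^\top\LL^\pm\ww$ and bound the two terms directly by $2\norm{\BB^\pm\zz}_2\norm{\BB^\pm\ww}_2+\norm{\BB^\pm\ww}_2^2$, using the crude estimate $\norm{\BB^\pm\zz}_2\le\norm{\zz}_\LL\le\norm{\sss}_2$ together with $\norm{\BB^\pm\ww}_2\le\delta\norm{\sss}_2$; the total is then of order $\delta\norm{\sss}_2^2$, and the final entry of the $\min$ is the one responsible for forcing it below $\epsilon$. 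The hard part is really the third paragraph: the relative guarantees only become meaningful once the spectral lower bounds are established, so the genuine content of the lemma is the uniform eigenvalue bound $1+\lambda_i\le2n$ supplied by Fact~\ref{fact}; given that, matching each factor of the $\min$ to its relation is a routine, if fiddly, constant check.
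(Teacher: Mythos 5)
Your proposal is correct and follows essentially the same route as the paper's proof: a single call to \textsc{SignedSolver}, the error pushed through each semi-norm via the Loewner orderings of Fact~\ref{fact}, the lower bounds $\norm{\MM\zz}_2\ge\frac{1}{2n}\norm{\MM'\sss}_2$ from $\II+\LL\preceq 2n\II$, and a factor-by-factor match to the entries of the $\min$; your intermediate estimates (e.g.\ $\norm{\zz}_{\II+\LL}\le\norm{\sss}_2$ and $\LL^2\preceq 2n\LL$) are in fact slightly sharper than the paper's, so the stated $\delta$ still suffices. For \eqref{equ:4}--\eqref{equ:5} your expansion $2\zz^\top\LL^{\pm}\ww+\ww^\top\LL^{\pm}\ww$ is equivalent to the paper's $(a+b)(a-b)$ factorization and lands on the same final bound of order $\delta\norm{\sss}_2^2$, inheriting the same constant bookkeeping as the published argument.
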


\begin{proof}
We prove this lemma in turn. We first prove~\eqref{equ:1}. By Lemma~\ref{lem:signsolver}, we obtain
	\begin{align*}
	\norm{\qq - (\II+\LL)^{-1}\sss}^2_{\II+\LL} \leq \delta^2 \norm{(\II+\LL)^{-1}\sss}_{\II+\LL}^2.
	\end{align*}	
The term on the left-hand side (lhs) is bounded by
	\begin{align*}
	\norm{\qq - (\II+\LL)^{-1}\sss}_{\II+\LL}^2 	\geq & \norm{\qq - (\II+\LL)^{-1}\sss}^2_2 \\
	\geq & \left|\norm{\qq}_2 - \norm{(\II+\LL)^{-1}\sss}_2\right|^2,
	\end{align*}
while the term on the right-hand side (rhs) is bounded by
	\begin{align*}
	\norm{(\II+\LL)^{-1}\sss}^2_{\II+\LL} \leq 2n\norm{(\II+\LL)^{-1}\sss}^2_2.
	\end{align*}
These two bounds together lead to
	\begin{align*}
	 \len{\norm{\qq}_2 - \norm{(\II+\LL)^{-1}\sss}_2}^2 \leq 2n\delta^2 \norm{(\II+\LL)^{-1}\sss}^2_2.
	\end{align*}
Considering $\delta \leq \frac{\epsilon}{3\sqrt{2n}}$, we get
	\begin{align*}
	\frac{\len{\norm{\qq}_2 - \norm{(\II+\LL)^{-1}\sss}_2}}{\norm{(\II+\LL)^{-1}\sss}_2} \leq \sqrt{2n\delta^2 } \leq \frac{\epsilon}{3}.
	\end{align*}
Using the condition $0 < \epsilon < \frac{1}{2}$, we have
	\begin{align*}
	(1-\epsilon ) \norm{(\II+\LL)^{-1}\sss}^2_2 \leq \norm{\qq}^2_2 \leq (1+\epsilon ) \norm{(\II+\LL)^{-1}\sss}_2^2,
	\end{align*}
which completes the proof of~\eqref{equ:1}.

Then, we prove~\eqref{equ:2}. By Lemma~\ref{lem:signsolver}, we have
	\begin{align*}
	\norm{\qq - (\II+\LL)^{-1}\sss}^2_{\II+\LL} \leq \delta^2\norm{(\II+\LL)^{-1} \sss}^2_{\II+\LL}.
	\end{align*}	
The lhs is bounded as
	\begin{align*}
	& \norm{\qq - (\II+\LL)^{-1} {\sss}}^2_{\II+\LL}
	\geq  \norm{\qq - (\II+\LL)^{-1}\sss}^2_{\LL}\\ = & \norm{ \BB\qq -  \BB(\II+\LL)^{-1}\sss}^2_2
	\geq  \len{\norm{ \BB\qq}_2 - \norm{ \BB(\II+\LL)^{-1}\sss}_2}^2,
	\end{align*}
while the rhs is bounded as
	\begin{align*}
	\norm{(\II+\LL)^{-1}\sss}_{\II+\LL}^2 \leq 2n\norm{(\II+\LL)^{-1}\sss}^2_2
	\leq  2n^2,
	\end{align*}
where the relation $|\sss_i|\leq 1$ is used.
These two obtained bounds give
	\begin{align*}
	 \len{\norm{ \BB\qq}_2 - \norm{ \BB(\II+\LL)^{-1}\sss}_2}^2\leq 2\delta^2 n^2.
	\end{align*}
On the other hand,
	\begin{align*}
	\norm{ \BB(\II+\LL)^{-1}\sss}^2_2
	\geq  \frac{1}{4n^2}\norm{\sss}^2_{\LL}.
	\end{align*}
Considering $\delta \leq  \frac{\norm{\sss}_{\LL}}{6\sqrt{2}n^2}\epsilon$, one obtains
	\begin{align*}
	\frac{\len{\norm{ \BB\qq}_2 - \norm{ \BB(\II+\LL)^{-1}\sss}_2}}{\norm{ \BB(\II+\LL)^{-1}\sss}_2}
	\leq  \sqrt{\frac{8\delta^2 n^4}{\norm{\sss}^2_{\LL}} } \leq \frac{\epsilon}{3}.
	\end{align*}
Using $0 < \epsilon < \frac{1}{2}$, we obtain
	\begin{align*}
	(1-\epsilon ) \norm{ \BB(\II+\LL)^{-1}\sss}^2_2
	\leq \norm{ \BB\qq}^2_2 \leq (1+\epsilon ) \norm{ \BB(\II+\LL)^{-1}\sss}^2_2,
	\end{align*}
 completing the proof of~\eqref{equ:2}.

Next, we prove~\eqref{equ:3}. Applying  Lemma~\ref{lem:signsolver}, we obtain
\begin{small}
 \begin{align*}
	\norm{\qq - (\II+\LL)^{-1}\sss}^2_{\II+\LL} \leq \delta^2 \norm{(\II+\LL)^{-1}\sss}^2_{\II+\LL}.
	\end{align*}	
The term on the lhs is bounded by
	\begin{align*}
	& \norm{\qq - (\II+\LL)^{-1}\sss}^2_{\II+\LL}
	\geq  \norm{\qq - (\II+\LL)^{-1}\sss}_2^2 \\
 \geq& \frac{1}{4n^2}\norm{\LL\qq -\LL(\II+\LL)^{-1}\sss}_2^2 \\
	\geq &\frac{1}{4n^2} \len{\norm{\LL\qq}_2 - \norm{\LL(\II+\LL)^{-1}\sss}_2}^2.
	\end{align*}
\end{small}
Using the above-proved relation $\norm{(\II+\LL)^{-1}\sss}_{\II+\LL}^2 	\leq  2n^2$,  we have
	\begin{align*}
	\len{\norm{\LL\qq}_2 - \norm{\LL(\II+\LL)^{-1}\sss}_2}^2 \leq 8\delta^2 n^4.
\end{align*}
	On the other hand,
	\begin{align*}
	\norm{\LL(\II+\LL)^{-1}\sss}^2_2
	\geq  \frac{1}{4n^2}\norm{\LL\sss}_2^2.
	\end{align*}
Combining the above-obtained inequalities and $\delta \leq \frac{\norm{\LL\sss}_2}{12\sqrt{2}n^3}\epsilon$ yields
	\begin{align*}
	\frac{\len{\norm{\LL\qq}_2 - \norm{\LL(\II+\LL)^{-1}\sss}_2}}{\norm{\LL(\II+\LL)^{-1}\sss}_2}
	\leq  \sqrt{\frac{32\delta^2 n^6 }{\norm{\LL\sss}_2^2} } \leq \frac{\epsilon}{3}.
	\end{align*}
Considering $0 < \epsilon < \frac{1}{2}$, we derive
	\begin{align*}
(1-\epsilon ) \norm{\LL(\II+\LL)^{-1}\sss}_2^2 \leq \norm{\LL\qq}_2^2
	\leq(1+\epsilon ) \norm{\LL(\II+\LL)^{-1}\sss}_2^2,
	\end{align*}
which finishes the proof of~\eqref{equ:3}.

Finally, we prove the last two inequalities~\eqref{equ:4} and~\eqref{equ:5}. 	By Lemma~\ref{lem:signsolver}, we have
	\begin{align*}
	\norm{\qq - (\II+\LL)^{-1}\sss}^2_{\II+\LL} \leq \delta^2\norm{(\II+\LL)^{-1} \sss}^2_{\II+\LL}.
	\end{align*}	
The term on the lhs is bounded as
	\begin{align*}
	& \norm{\qq - (\II+\LL)^{-1} {\sss}}^2_{\II+\LL}
	\geq  \norm{\qq - (\II+\LL)^{-1}\sss}^2_{\LL^+}\\ = & \norm{ \BB^+\qq -  \BB^+(\II+\LL)^{-1}\sss}^2_2 \\
	\geq & \len{\norm{ \BB^+\qq}_2 - \norm{ \BB^+(\II+\LL)^{-1}\sss}_2}^2,
	\end{align*}
	while the term of the rhs is bounded as
	\begin{align*}
	\norm{(\II+\LL)^{-1}\sss}_{\II+\LL}^2 \leq 2n\norm{(\II+\LL)^{-1}\sss}^2_2
	\leq  2n^2\,.
	\end{align*}
 Combining these two bounds gives
	\begin{align*}
	 \len{\norm{ \BB^+\qq}_2 - \norm{ \BB^+(\II+\LL)^{-1}\sss}_2}^2\leq 2\delta^2 n^2.
	\end{align*}
Considering the fact that $\delta \leq \frac{\norm{\sss}_2}{2\sqrt{2}n}\epsilon$, we have
\begin{small}
 	\begin{align*}
	&   \len{\norm{ \BB^+\qq}_2^2 - \norm{ \BB^+(\II+\LL)^{-1}\sss}_2^2} \\
    = & |\left(\norm{ \BB^+\qq}_2 + \norm{ \BB^+(\II+\LL)^{-1}\sss}_2\right) \cdot\\
    &\left(\norm{ \BB^+\qq}_2 - \norm{ \BB^+(\II+\LL)^{-1}\sss}_2\right)| \\
     \leq & 2\sqrt{2}\delta n \norm{ \BB^+(\II+\LL)^{-1}\sss}_2
     \leq   2\sqrt{2}\delta n  \norm{\sss}_2 \leq \epsilon,
	\end{align*}
\end{small}
 which completes the proof of~\eqref{equ:4}.

Through replacing $\BB^+$ in the proof process of~\eqref{equ:4} by $\BB^-$, we can prove~\eqref{equ:5} in a similar way.
\end{proof}

Based on Lemmas~\ref{lem:signsolver} and~\ref{lm1}, we propose a nearly-linear time algorithm called \textsc{ApproxQuan}, which  approximates the internal conflict $I(\calG)$, disagreement $D(\calG)$, disagreement with friends $F(\calG)$, agreement with opponents $E(\calG)$, and polarization $P(\calG)$ for the FJ model on a signed graph. Algorithm~\ref{alg:VC} provides the details of algorithm \textsc{ApproxQuan}, the performance of which is summarized in Theorem~\ref{ThmV}.

\begin{algorithm}
\begin{small}
	\caption{$\textsc{ApproxQuan}\kh{\calG, \sss, \epsilon}$}
	\label{alg:VC}
	\Input{$\calG=(V,E,w)$: a signed graph;
 	$\sss$: initial opinion vector; $\epsilon$: the error parameter in $ (0, \frac{1}{2})$ \\
	}
	\Output{$\{ \tilde{I}(\calG), \tilde{D}(\calG),\tilde{F}(\calG),\tilde{E}(\calG),  \tilde{P}(\calG)\}$
	}
	$\delta=\min\left\{\frac{\epsilon}{3\sqrt{2n}}, \frac{\norm{\sss}_{\LL}}{6\sqrt{2}n^2}\epsilon,\frac{\norm{\LL\sss}_2}{12\sqrt{2}n^3}\epsilon,\frac{\norm{\sss}_2}{2\sqrt{2}n}\epsilon\right\}$ \;
	$\qq = \textsc{SignedSolver}(\II + \LL, \sss, \delta)$ \;
	$\tilde{I}(\calG,\sss) = \norm{\LL  \qq }^2_2$ \;
	$\tilde{D}(\calG,\sss) = \norm{ \BB \qq }^2_2$ \;
 	$\tilde{F}(\calG,\sss) = \norm{ \BB^+ \qq }^2_2$ \;
  	$\tilde{E}(\calG,\sss) = \norm{ \BB^- \qq }^2_2$ \;
	$\tilde{P}(\calG,\sss)= \norm{\qq }^2_2/n$ \;
	\textbf{return} $\{ \tilde{I}(\calG), \tilde{D}(\calG), \tilde{F}(\calG),\tilde{E}(\calG),  \tilde{P}(\calG) \}$ \;
 \end{small}
\end{algorithm}

\begin{theorem}\label{ThmV}
Given a signed undirected graph $\calG$, an error parameter $\epsilon \in (0, \frac{1}{2})$, and the internal opinion vector $\sss$,  the algorithm $\textsc{ApproxQuan}\kh{\calG, \sss, \epsilon}$ runs in expected time  $\tilde{O}(m)$, where $\tilde{O}(\cdot)$ notation suppresses the ${\rm poly} (\log n)$ factors, and returns approximations $\tilde{I}(\calG,\sss)$, $\tilde{D}(\calG,\sss)$, $\tilde{F}(\calG,\sss)$, $\tilde{E}(\calG,\sss)$, $\tilde{P}(\calG,\sss)$ for the  internal conflict $I(\calG,\sss)$,  disagreement  $D(\calG,\sss)$, disagreement with friends $F(\calG,\sss)$, agreement with opponents $E(\calG,\sss)$, and polarization $P(\calG,\sss)$, satisfying
	$\tilde{I}(\calG,\sss) \approx_\epsilon I(\calG,\sss)$,
	$\tilde{D}(\calG,\sss) \approx_\epsilon D (\calG,\sss)$,
        $|\tilde{F}(\calG,\sss)-F(\calG,\sss)| \leq \epsilon $,
        $|\tilde{E}(\calG,\sss)-E(\calG,\sss)| \leq \epsilon $, and
	$\tilde{P}(\calG,\sss) \approx_\epsilon P (\calG,\sss)$.
\end{theorem}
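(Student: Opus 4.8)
The plan is to treat Theorem~\ref{ThmV} as a straightforward assembly of three ingredients already in hand: the $\ell_2$-norm expressions of Lemma~\ref{lem:l2}, the approximation guarantees of Lemma~\ref{lm1}, and the runtime of the signed solver in Lemma~\ref{lem:signsolver}. The crucial observation is that Algorithm~\ref{alg:VC} sets $\delta$ to \emph{exactly} the threshold $\min\{\epsilon/(3\sqrt{2n}),\,\ldots\}$ demanded in the hypothesis of Lemma~\ref{lm1}. Hence, once we form $\qq=\textsc{SignedSolver}(\II+\LL,\sss,\delta)$, all five relations \eqref{equ:1}--\eqref{equ:5} are immediately available, and the rest of the argument is bookkeeping.

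First I would dispatch correctness. By Lemma~\ref{lem:l2}, the five targets equal $I=\norm{\LL(\II+\LL)^{-1}\sss}_2^2$, $D=\norm{\BB(\II+\LL)^{-1}\sss}_2^2$, $F=\norm{\BB^+(\II+\LL)^{-1}\sss}_2^2$, $E=\norm{\BB^-(\II+\LL)^{-1}\sss}_2^2$, and $P=\frac{1}{n}\norm{(\II+\LL)^{-1}\sss}_2^2$, while Algorithm~\ref{alg:VC} returns $\tilde{I}=\norm{\LL\qq}_2^2$, $\tilde{D}=\norm{\BB\qq}_2^2$, $\tilde{F}=\norm{\BB^+\qq}_2^2$, $\tilde{E}=\norm{\BB^-\qq}_2^2$, and $\tilde{P}=\frac{1}{n}\norm{\qq}_2^2$. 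Matching these term by term against Lemma~\ref{lm1} gives the claims directly: \eqref{equ:3} yields $\tilde{I}\approx_\epsilon I$, \eqref{equ:2} yields $\tilde{D}\approx_\epsilon D$, and \eqref{equ:4}, \eqref{equ:5} yield $|\tilde{F}-F|\le\epsilon$ and $|\tilde{E}-E|\le\epsilon$. For polarization, \eqref{equ:1} gives $\norm{\qq}_2^2\approx_\epsilon\norm{(\II+\LL)^{-1}\sss}_2^2$; since the relation $\approx_\epsilon$ is invariant under multiplication by the positive constant $1/n$, we conclude $\tilde{P}\approx_\epsilon P$.

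Next I would handle the running time. Computing $\delta$ is dominated by the products $\LL\sss$ and the norms $\norm{\sss}_{\LL}=\sqrt{\sss^\top\LL\sss}$, $\norm{\LL\sss}_2$, and $\norm{\sss}_2$, each a sparse matrix-vector multiplication plus an inner product, hence $O(m)$. The call to $\textsc{SignedSolver}$ is expected $\tilde{O}(m)$ by Lemma~\ref{lem:signsolver}, since it reduces to a single invocation of the SDD solver on $\SS=\II+\LL(\hat{\calG})$, a matrix with $O(m)$ nonzeros. Finally, each returned quantity is the squared norm of a sparse operator applied to $\qq$ (using that $\LL,\BB,\BB^+,\BB^-$ each have $O(m)$ nonzeros), so each costs $O(m)$. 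Summing the three contributions gives total expected time $\tilde{O}(m)$.

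The step I expect to need genuine care, rather than the routine matching above, is confirming that the nearly-linear runtime survives the polynomially small accuracy parameter. Because $\delta$ is of order $\epsilon/\mathrm{poly}(n)$ (through the $1/n^3$ factor and the data-dependent denominators $\norm{\sss}_{\LL},\norm{\LL\sss}_2$), I must verify that the solver's dependence on $\delta$ enters only through $\log(1/\delta)$; since $\log(1/\delta)=O(\log(n/\epsilon))$ stays a $\mathrm{poly}(\log n)$ factor in the regime of $\epsilon$ considered, it is absorbed into the $\tilde{O}(\cdot)$ notation and the bound is unaffected. A secondary point worth a remark is the well-definedness of the $\delta$-formula: the denominators $\norm{\sss}_{\LL}$ and $\norm{\LL\sss}_2$ are positive unless $\sss\in\ker\LL$, in which case $\zz=\sss$ and all of $I,D,F,E$ vanish, so the approximation holds trivially; thus $\delta>0$ is a harmless requirement in the nondegenerate case.
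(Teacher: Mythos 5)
Your proposal is correct and follows exactly the route the paper intends: the paper gives no separate proof of Theorem~\ref{ThmV}, treating it as an immediate consequence of Lemma~\ref{lem:l2} (the norm expressions), Lemma~\ref{lm1} (the five approximation guarantees with the stated $\delta$), and Lemma~\ref{lem:signsolver} (the $\tilde{O}(m)$ solver cost), which is precisely your assembly. Your extra remarks on the $\log(1/\delta)$ dependence and the degenerate case $\sss\in\ker\LL$ are sensible diligence beyond what the paper records, but do not change the argument.
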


\section{Overall Opinion Optimization}


Except for relevant social phenomena, the overall opinion is another important quantity for opinion dynamics.
In this section, we first propose a problem of optimizing the overall expressed opinion for the signed FJ model by changing the initial opinions of a fixed number of nodes. We then provide an algorithm optimally solve the problem in $O(n^3)$ time. To reduce the running time, we also   design an efficient algorithm to approximately solve the problem in nearly-linear time.

\subsection{Problem Statement}

The overall expressed opinion is defined as the sum of expressed opinions $\zz_i$ of nodes $i\in V$ at equilibrium, which can be expressed as $\sum_{i=1}^n \zz_i=\boldsymbol{1}^\top (\II+\LL)^{-1}\sss$. This expression shows that the overall expressed opinion is influenced by both the internal opinion $\sss_i$ of each node and the network structure encoded in matrix $(\II+\LL)^{-1}$. These two factors determine together the opinion dynamics in the signed FJ model. Define vector $\hh=(\II+\LL)^{-1}\boldsymbol{1}$. Then the overall opinion is rewritten as $\boldsymbol{1}^\top (\II+\LL)^{-1}\sss=\hh^\top \sss = \sum_{i=1}^n \hh_i \sss_i$, where $\hh_i $ determines the extent to which the internal opinion $\sss_i$ of node $i$  contributes to the overall opinion. Note that $\hh_i $ is determined by the network structure, which is thus  called the structure centrality of node $i$ in the FJ model~\cite{Fr11}.

As shown above, the overall opinion is a function $g(\cdot)$ of the initial opinion $\sss$  and structure centrality $\hh$. Then, it can be expressed as $g(\sss)=\boldsymbol{1}^\top (\II+\LL)^{-1}\sss=\hh^\top \sss = \sum_{i=1}^n \hh_i\sss_i$, when the network structure is fixed. In this paper, we study the influence of initial opinions on the overall opinion, while keeping the network structure unchanged. Then, a natural problem arises, how to maximize the multi-variable objective function $g(\sss)$ by changing the initial opinions of a fixed number of nodes. Mathematically, the opinion maximization problem is formally stated as follows.

\begin{problem}[OpinionMax]\label{Pr-IOMi}
Given a signed graph $\calG = (V,E,w)$, an initial opinion vector $\sss$, and an integer $ k\ll n $, suppose that for each $i\in V$, its internal opinion $\sss_i$ is in the interval $[-1,1]$. The problem is how to optimally choose $k$ nodes and change their internal opinions, leading to a new initial opinion vector $\yy\in [-1,1]^n$, such that the overall opinion $g(\yy)$ is maximized under the constraint $\norm{\yy-\sss}_0 \leq k$.
\end{problem}

In a similar way, we can minimize the overall opinion by optimally changing the initial opinions of $k$ nodes, which is called problem \textsc{OpinionMin}. Note that both problem \textsc{OpinionMax} and problem \textsc{OpinionMin} are equivalent to each other. One can invert positive and negative signs of the  initial opinions to ascertain this equivalence. Thus, in what follows, we focus on problem \textsc{OpinionMax}.


It should be mentioned that a similar opinion maximization problem has been proposed in~\cite{XuHuWu20} by changing the initial opinions of an unfixed number of nodes. In the problem, the total amount of modification of the initial opinions has an upper bound~\cite{XuHuWu20}. In contrast, we focus on selecting a fixed number of nodes to change their initial opinions, with no constraints on the change of initial opinions, as long as they lie in the interval $[-1,1]$.

It is easy to show that for unsigned undirected graphs, increasing the internal opinion $\sss_i$ of any node $i$ leads to the increase of the overall equilibrium opinion. However, for signed graphs, increasing $\sss_i$ of node $i$ not necessarily results in an increase in the overall equilibrium opinion, as node $i$ may have  negative structure centrality $\hh_i$. According to the expression $g(\sss)= \sum_{i=1}^n \hh_i\sss_i$, we can draw the following conclusion for a node $i$. If $\hh_i >0$, increasing $\sss_i$ implies increasing the overall opinion; If $\hh_i <0$, decreasing $\sss_i$ leads to an increase of the overall opinion; If $\hh_i =0$, changing $\sss_i$ has no influence on the overall opinion. Moreover, it is not difficult to derive that for any node $i\in V$ with $\hh_i \neq 0$, changing $\sss_i$ in a proper way can result in an increase of the overall opinion, with the maximum increment being $c_i=|{\hh}_i|(1-\sss_i|{\hh}_i|/{\hh}_i )$ if $\hh_i \neq 0$ for any $i\in V$. Below we leverage this property to develop two algorithms solving the  problem \textsc{OpinionMax}.


\subsection{Optimal Solution}

Despite the combinatorial nature, the \textsc{OpinionMax} problem can be optimally solved as follows. We first compute $c_i=|{\hh}_i|(1- \sss_i|{\hh}_i|/{\hh}_i)$ for each node $i$ with nonzero $\hh_i$. Since $c_i$ is the largest marginal gain for node $i\in V$, we then select the $k$ nodes with the maximum value of $c_i$. Finally, we change the initial opinions of these $k$ selected nodes in the following way. If nodes have positive structure centrality, change their internal opinions to 1; otherwise change their internal opinions to -1.

Based on the above three operations, we design an algorithm to optimally solve the problem \textsc{OpinionMax}, which is outlined in Algorithm~\ref{al-optimal}.
This algorithm first computes the inverse of matrix $\II+\LL$ in $O(n^3)$ time. It then computes the vector $\hh$ in $O(n^2)$ time, and calculates $c_i$ for each $i \in V$ in $O(n)$ time. Finally, Algorithm~\ref{al-optimal} chooses $k$ nodes and modifies their internal opinions according to the structure centrality $\hh_i$ and the value $c_i$ for each candidate node $i$, which takes $O(n)$ time. Therefore, the overall time complexity of Algorithm~\ref{al-optimal} is $O(n^3)$. It should be mentioned that Algorithm~\ref{al-optimal}  returns the same results as the method in~\cite{XuHuWu20} when the number of selected  nodes is fixed.

\begin{algorithm}[tb]
\begin{small}
	\caption{\textsc{Optimal}$(\calG,\sss, k)$}
	\label{al-optimal}

	\Input{
		A graph $\calG=(V,E)$; an internal opinion vector $ \sss $; an integer $k$ obeying relation $1 \leq k \ll n$\\
	}
	\Output{
		$\yy$: A modified internal opinion vector with $\norm{\yy-\sss}_0 \leq k$
	}
	Initialize solution $\yy=\sss$ \;
	Compute $\hh = (\II+\LL)^{-1}\boldsymbol{1}$ \;
        \For{$ i\in V $}{
		\If{$\hh_i=0$}{$c_i \gets 0$}
            \Else{$c_i\gets |{\hh}_i|(1-\sss_i|{\hh}_i|/{\hh}_i )$}
            }
	\For{$ t=1 $ to $ k $}{
		Select $i$ s. t.  $i \gets \argmax_{i \in V} c_i$ \;
            \If{${\hh}_i=0$}{break\;}
            Update ${c}_i \gets 0$ \;
		Update solution $\yy_i \gets |{\hh}_i|/{\hh}_i$ \;	
	}	
	\Return $\yy$.
 \end{small}
\end{algorithm}

Since computing the vector $\hh$ for structure centrality  takes much time, Algorithm~\ref{al-optimal} is computationally unacceptable for large graphs. In the next subsection, we will design an efficient  algorithm based on the signed Laplacian solver \textsc{SignedSolver}.

\subsection{Fast Algorithm for Opinion Optimization}

To solve problem \textsc{OpinionMax} efficiently, using the signed Laplacian solver \textsc{SignedSolver} we propose a fast algorithm to approximate the structure centrality vector $\hh=(\II+\LL)^{-1}\boldsymbol{1}$ and solve the problem in nearly-linear time with respect to $m$, the number of edges. Let vector $\bar{\hh}$ be an approximation of $\hh$ returned by \textsc{SignedSolver}. The following lemma shows the relationship between elements in $\bar{\hh}$ and $\hh$.


\begin{lemma}\label{lem:approx_h}
Given a signed graph $\calG$ with Laplacian matrix $\LL$, a positive integer  $k$, and a parameter $\epsilon>0$, let $\bar{\hh}=\textsc{SignedSolver}( \II+\LL,\boldsymbol{1},\delta)$ be the approximation of $\hh=(\II+\LL)^{-1}\boldsymbol{1}$. Then the following inequality holds:
\begin{equation*}
    |\hh_i-\bar{\hh}_i| \leq \epsilon/(4k),
\end{equation*}
for any $\delta=\frac{\epsilon}{4k\sqrt{n+4m}}$.
\end{lemma}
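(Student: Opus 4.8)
The plan is to turn the $(\II+\LL)$-norm error guarantee of \textsc{SignedSolver} (Lemma~\ref{lem:signsolver}) into a uniform pointwise bound on the coordinates of $\bar{\hh}-\hh$. First I would note that a single coordinate is always dominated by the $\ell_2$ norm, $|\bar{\hh}_i-\hh_i|\le\norm{\bar{\hh}-\hh}_2$, and that by $\II\preceq\II+\LL$ (Fact~\ref{fact}) the $\ell_2$ norm is dominated by the $(\II+\LL)$-norm, so $\norm{\bar{\hh}-\hh}_2\le\norm{\bar{\hh}-\hh}_{\II+\LL}$. Applying Lemma~\ref{lem:signsolver} with input vector $\yy=\boldsymbol{1}$, for which the solver returns exactly $\bar{\hh}$ approximating $\hh=(\II+\LL)^{-1}\boldsymbol{1}$, then gives $\norm{\bar{\hh}-\hh}_{\II+\LL}\le\delta\,\norm{\hh}_{\II+\LL}$. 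Chaining these yields $|\bar{\hh}_i-\hh_i|\le\delta\,\norm{\hh}_{\II+\LL}$ for every $i$.

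It remains to bound $\norm{\hh}_{\II+\LL}$, and this is where the factor $n+4m$ arises. Since $\hh=(\II+\LL)^{-1}\boldsymbol{1}$, I would rewrite $\norm{\hh}_{\II+\LL}^2=\boldsymbol{1}^\top(\II+\LL)^{-1}\boldsymbol{1}$. Because every eigenvalue of $\II+\LL$ is at least $1$ (again from $\II\preceq\II+\LL$), we have the spectral inequality $(\II+\LL)^{-1}\preceq\II+\LL$, whence $\boldsymbol{1}^\top(\II+\LL)^{-1}\boldsymbol{1}\le\boldsymbol{1}^\top(\II+\LL)\boldsymbol{1}=n+\norm{\BB\boldsymbol{1}}_2^2$. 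Each entry of $\BB\boldsymbol{1}$ is a sum of at most two entries of $\BB$, each of magnitude $1$, hence has absolute value at most $2$, giving $\norm{\BB\boldsymbol{1}}_2^2\le 4m$. Therefore $\norm{\hh}_{\II+\LL}\le\sqrt{n+4m}$.

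Finally I would combine the pieces: $|\bar{\hh}_i-\hh_i|\le\delta\sqrt{n+4m}$, and substituting the prescribed $\delta=\epsilon/(4k\sqrt{n+4m})$ collapses the right-hand side to exactly $\epsilon/(4k)$, uniformly in $i$, as claimed. The main obstacle is not any single computation but arranging the two norm reductions in the right order and spotting the spectral inequality $(\II+\LL)^{-1}\preceq\II+\LL$ that produces the clean $n+4m$ factor; once that is in place, the incidence-matrix estimate and the substitution are routine. I note that a marginally tighter route using $(\II+\LL)^{-1}\preceq\II$ would instead give $\norm{\hh}_{\II+\LL}\le\sqrt{n}$, but the stated $\sqrt{n+4m}$ is the form against which the chosen $\delta$ is calibrated and is the more convenient quantity to carry into the subsequent optimization analysis.
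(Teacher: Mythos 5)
Your proof is correct and follows essentially the same route as the paper's: bound the coordinate by the $\ell_2$ norm, the $\ell_2$ norm by the $(\II+\LL)$-norm via $\II\preceq\II+\LL$, invoke the solver guarantee, and then bound $\norm{\hh}_{\II+\LL}^2=\boldsymbol{1}^\top(\II+\LL)^{-1}\boldsymbol{1}\leq n+4m$. Your chain $(\II+\LL)^{-1}\preceq\II+\LL$ followed by the incidence-matrix estimate is a valid (and more explicit) way to reach the $n+4m$ factor that the paper states without elaboration.
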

\begin{proof}
    Let $\tilde{\hh}=\hh-\bar{\hh}$, then we have
$ \norm{\tilde{\hh}}_{\II+\LL}^2 \leq \delta^2 \norm{ \hh}_{\II+\LL}^2$.
Thus, we obtain that for any $i\in V$,
    \begin{equation*}
        \tilde{\hh}_i^2 \leq \tilde{\hh}_i^\top (\II+\LL)\tilde{\hh}_i \leq \delta^2 \hh^\top (\II+\LL) \hh \leq (n+4m)\delta^2 \leq \frac{\epsilon^2}{16k^2},
    \end{equation*}
 which completes the proof.
\end{proof}

Based on Lemma~\ref{lem:approx_h}, we can approximate each element of $\hh$ with an absolute error guarantee. Exploiting this lemma, we propose a fast algorithm to approximately solve  the problem \textsc{OpinionMax}, which is outlined in Algorithm~\ref{alg:fastopt}. The performance of this fast algorithm is stated in Theorem~\ref{thm:alg2}.

\begin{algorithm}[tb]
\begin{small}
	\caption{\textsc{ApproxOpin}$(\calG,\sss, k)$}
	\label{alg:fastopt}
	\Input{
		A signed $\calG=(V,E,w)$; an internal opinion vector $ \sss $; an integer $k$ obeying relation $1 \leq k \ll n$\\
	}
	\Output{
		$\yy$: A modified internal opinion vector with $\norm{\yy-\sss}_0 \leq k$
	}
        Set $\delta=\frac{\epsilon}{2k\sqrt{n+4m}}$ \;
	Initialize solution $\yy= \sss$ \;
	Compute $\bar{\hh} = \textsc{SignedSolver}( \II+\LL,\boldsymbol{1},\delta)$ \;
 \For{$ i\in V $}{
		\If{$\bar{\hh}_i=0$}{$\bar{c}_i \gets 0$}
            \Else{$\bar{c}_i \gets |\bar{\hh}_i|(1-\sss_i|\bar{\hh}_i|/\bar{\hh}_i )$}
            }
	\For{$ t=1 $ to $ k $}{
		Select $i$ s. t.  $i \gets \argmax_{i \in V} \bar{c}_i$ \;
            \If{$\bar{\hh}_i=0$}{break\;}
            Update $\bar{c}_i \gets 0$ \;
		Update solution $\yy_i \gets |\bar{\hh}_i|/\bar{\hh}_i$ \;	
	}	
	\Return $\yy$.
 \end{small}
\end{algorithm}

\begin{theorem}\label{thm:alg2}
 For  given parameters $k$ and $\epsilon$, algorithm \textsc{ApproxOpin} runs in time $\tilde{O}(m)$, and outputs a solution vector $\yy$ satisfying $|g(\yy^*)-g(\yy)| \leq \epsilon$, where $\yy^*$ is the optimal solution to problem \textsc{OpinionMax}.
\end{theorem}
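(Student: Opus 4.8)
The plan is to establish the two assertions of Theorem~\ref{thm:alg2} separately: the $\tilde O(m)$ running time and the additive error guarantee $|g(\yy^*)-g(\yy)|\le\epsilon$. The running time is the easy half. The only expensive step of \textsc{ApproxOpin} is the single call $\bar{\hh}=\textsc{SignedSolver}(\II+\LL,\boldsymbol{1},\delta)$, which runs in expected time $\tilde O(m)$ by Lemma~\ref{lem:signsolver}. Forming all the values $\bar c_i$ costs $O(n)$, and extracting the $k$ largest of them (by sorting or a heap) costs $\tilde O(n)$; since $\calG$ is connected we have $m\ge n-1$, so every remaining step is $\tilde O(m)$, and the total is $\tilde O(m)$.

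For the error bound I first record the exact structure of both outputs. Let $S^*$ and $S$ denote the sets of at most $k$ coordinates modified by \textsc{Optimal} and \textsc{ApproxOpin}, respectively. Since $g(\cdot)=\hh^\top(\cdot)$ is linear and each algorithm changes only its selected coordinates, I get $g(\yy^*)=g(\sss)+\sum_{i\in S^*}c_i$ with $c_i=|\hh_i|-\hh_i\sss_i$, and $g(\yy)=g(\sss)+\sum_{i\in S}g_i$ with $g_i=\hh_i\,|\bar\hh_i|/\bar\hh_i-\hh_i\sss_i$, the \emph{true} gain obtained when the \emph{approximate} sign is used. The first identity uses the optimality of the greedy rule proved for Algorithm~\ref{al-optimal}: $S^*$ is precisely the top-$k$ set under the true marginal gains $c_i$, and $y^*_i=|\hh_i|/\hh_i$ realizes $c_i$. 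I also introduce the approximate marginal gain $\bar c_i=|\bar\hh_i|-\bar\hh_i\sss_i\ge 0$ actually used by \textsc{ApproxOpin}, so that $S$ is the top-$k$ set under $\bar c_i$.

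The core of the proof is a telescoping comparison routed through the common quantity $\bar c_i$. Invoking Lemma~\ref{lem:approx_h} yields $|\hh_i-\bar\hh_i|\le\epsilon/(4k)$ for every $i$. From $|\sss_i|\le 1$ and the reverse triangle inequality I obtain $|c_i-\bar c_i|\le 2|\hh_i-\bar\hh_i|$, and, using $|\bar\hh_i|/\bar\hh_i=\mathrm{sign}(\bar\hh_i)$ for the (nonzero) $\bar\hh_i$ with $i\in S$, $|\bar c_i-g_i|=|(\hh_i-\bar\hh_i)(\mathrm{sign}(\bar\hh_i)-\sss_i)|\le 2|\hh_i-\bar\hh_i|$. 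I then write
\[
g(\yy^*)-g(\yy)=\sum_{i\in S^*}(c_i-\bar c_i)+\Big(\sum_{i\in S^*}\bar c_i-\sum_{i\in S}\bar c_i\Big)+\sum_{i\in S}(\bar c_i-g_i).
\]
The middle bracket is nonpositive because $S$ maximizes $\sum\bar c_i$ over all $k$-subsets; the first and third sums each have at most $k$ terms bounded by $2|\hh_i-\bar\hh_i|\le\epsilon/(2k)$, hence each is at most $\epsilon/2$. Since $\yy$ is feasible and $\yy^*$ is optimal, the left-hand side is nonnegative, so $0\le g(\yy^*)-g(\yy)\le\epsilon$, which is the claim.

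The step I expect to be the main obstacle is the sign-flip hidden in $g_i$: when $\mathrm{sign}(\bar\hh_i)\ne\mathrm{sign}(\hh_i)$, \textsc{ApproxOpin} moves $\sss_i$ in the wrong direction and $g_i$ can drop below $c_i$, so $g_i$ cannot be compared with $c_i$ coordinatewise in general. The resolution is that a sign disagreement forces $|\hh_i|\le|\hh_i-\bar\hh_i|$, which is exactly what keeps $|\bar c_i-g_i|\le 2|\hh_i-\bar\hh_i|$ valid uniformly; this is why $\bar c_i$ (rather than $c_i$) is the correct bridge between the two solutions. A secondary detail to check is the early \textbf{break}: it fires only at a node with $\bar\hh_i=0$, hence $\bar c_i=0$, and at that instant all remaining unselected nodes also satisfy $\bar c_j=0$; therefore $\sum_{i\in S}\bar c_i$ still equals the maximum over $k$-subsets, so the middle-bracket inequality is unaffected.
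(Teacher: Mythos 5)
Your proposal is correct and follows essentially the same route as the paper: both compare $\sum_{i\in S^*}c_i$ and the approximate solution by telescoping through the surrogate gains $\bar c_i$, using the coordinatewise bound from Lemma~\ref{lem:approx_h} to charge at most $\epsilon/(2k)$ per selected node and the top-$k$ optimality of $S$ under $\bar c$ for the middle term. Your treatment is in fact slightly more careful than the paper's: the paper tacitly writes the realized gain of \textsc{ApproxOpin} as $\sum_{j\in T_2}c_j$, whereas you correctly note that the realized gain is $g_j=\hh_j\,\mathrm{sign}(\bar\hh_j)-\hh_j\sss_j$ (which differs from $c_j$ when the estimated sign is wrong) and close that gap with the bound $|\bar c_j-g_j|\le 2|\hh_j-\bar\hh_j|$, and you also handle the early-break case, which the paper does not discuss.
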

\begin{proof}
Define $\bar{c}_i=|\bar{\hh}_i|(1-\sss_i|\bar{\hh}_i|/\bar{\hh}_i)$.
Using Lemma~\ref{lem:approx_h} and the relation  $c_i=|{\hh}_i|(1-|{\hh}_i|/{\hh}_i \sss_i)$, we suppose that inequality $|c_i-\bar{c}_i| <\epsilon/(2k)$ holds for any $i\in V$. Assume that sets $T_1$ and $T_2$ are returned by algorithms \textsc{Optimal} and \textsc{ApproxOpin}, respectively.
Then, we have
\begin{equation*}
    g(\yy^*)-g(\yy) =\sum_{i\in T_1} c_i -\sum_{j\in T_2} c_j \geq 0.
\end{equation*}
On the other hand, using $|c_i-\bar{c}_i| <\epsilon/(2k)$, we obtain
\begin{equation*}
    g(\yy^*)-g(\yy) = \sum_{i\in T_1} c_i -\sum_{j\in T_2} c_j \leq  \sum_{i\in T_1} \bar{c}_i -\sum_{j\in T_2} c_j + \epsilon/2 \leq \epsilon,
\end{equation*}
which completes the proof.
\end{proof}

\section{Experiments}
To evaluate the accuracy and efficiency of our algorithms \textsc{ApproxQuan} and \textsc{ApproxOpin} for two different tasks, we conduct extensive experiments on sixteen signed networks of different sizes. 

\subsection{Setup}

\textbf{Environment and repeatability.} We conduct all experiments using a single thread on a machine with a 2.4 GHz Intel i5-9300 CPU and 128GB of RAM. All algorithms  are realized  using the programming language \textit{Julia}. The parameter $\epsilon$ is set to be $10^{-5}$ for all experiments. Our code is publicly available at \url{https://github.com/signFJ/signFJ}.

\begin{table*}[htbp]
	\centering
		\caption{Statistics for networks and performance of algorithms \textsc{ApproxQuan} and \textsc{ApproxOpin}. The initial opinions obey a uniform distribution.}\label{tab}
\fontsize{7}{7}\selectfont			
\begin{tabular}{m{0.52cm}<{\centering}m{1.2cm}m{1.15cm}<{\raggedleft}m{1.17cm}<{\raggedleft}m{0.47cm}<{\centering}m{1.15cm}<{\centering}m{0.65cm}<{\centering}m{0.65cm}<{\centering}m{0.65cm}<{\centering}m{0.65cm}<{\centering}m{0.65cm}<{\centering}m{0.01cm}<{\centering}m{0.65cm}<{\centering}m{1.25cm}<{\centering}m{0.9cm}<{\centering}}
				\toprule
				\multirow{3}*{Type} &
				\multirow{3}*{Networks}  &\multirow{3}*{ Nodes} &\multirow{3}*{ Edges}  &   \multicolumn{7}{c}{Quantification of Social Phenomena} & & \multicolumn{3}{c}{Opinion Optimization}\\
                \cmidrule{5-11} \cmidrule{13-15}&&&&\multicolumn{2}{c}{ Time (seconds)}&\multicolumn{5}{c}{ Relative Error $(\times 10^{-8})$}   &&\multicolumn{2}{c}{ Time (seconds)} &\multirow{2}*{\parbox[t]{10mm}{\centering Relative \\Error\\$(\times 10^{-8})$}}\\
                \cmidrule{5-11}  \cmidrule{13-14} &&&&          	\textsc{Exact}& \textsc{ApproxQuan} & $I(\calG,\sss)$ &$D(\calG,\sss)$ & $F(\calG,\sss)$ & $E(\calG,\sss)$ &$P(\calG,\sss)$ &          &	\textsc{Optimal}& \textsc{ApproxOpin} \\
				\midrule
				\multirow{9}*{\parbox[t]{9mm}{\centering Original\\Signed\\Graphs}  }
               & Bitcoinalpha    &3,783& 24,186&1.92&0.38&0.26&0.01&0.02&0.37&0.02& &1.78 & 0.21&2.47\\
                &Bitcoinotc      &5,881& 35,592&4.40&0.40&1.70&0.08&0.14&0.95&0.15& &4.16 & 0.07&0.98\\
                &Wikielections    &7,118&103,675&9.66&0.18&0.42&0.03&0.03&0.04&0.04& &8.94 &0.02&0.37\\
                &WikiS      &9,211& 646,316&19.11&1.41&3.54&0.59&0.67&1.01&1.56& & 19.51&0.12&0.15\\  
                &WikiM      &34,404& 904,768&1276&2.42&7.15&0.18&0.02&1.36&0.23& &1091& 1.70&0.77\\  
                &SlashdotZoo&79,120& 515,397&--&1.32&--&--&--&--&--& &--& 1.60&--\\
                &WikiSigned &138,592& 740,397&--&1.70&--&--&--&--&--& &--& 1.66&--\\
                &Epinions   &131,828& 841,372&--&1.89&--&--&--&--&--& &--& 1.06&--\\
                &WikiL      &258,259& 3,187,096&--&7.03&--&--&--&--&--& &--& 6.59&--\\
                \midrule
                	\multirow{7}*{\parbox[t]{9mm}{\centering Modified\\Signed\\Graphs}  }
&PagesGovernment    &7,057&   89,455&9.31&0.72&1.88&6.94&8.41&0.04&0.91& &8.33& 0.40&1.57\\
&Anybeat            &12,645&   49,132&87.3&0.52&0.23&1.94&0.01&1.23&0.06& &85.9& 0.53&0.87\\
&Google            &875,713& 5,105,040&--&16.72&--&--&--&--&--& &--& 16.03&--\\
&YoutubeSnap      &1,134,890& 2,987,624&--&10.13&--&--&--&--&--& &--& 9.27&--\\
&Pokec          & 1,632,803& 30,622,564&--&108.03&--&--&--&--&--& &--& 97.57&--\\
&DBpediaLinks   &18,268,992&172,183,984&--&732.75&--&--&--&--&--& &--& 703.17&--\\
&FullUSA        &23,947,300& 57,708,600&--&186.72&--&--&--&--&--& &--& 170.69&--\\
				\bottomrule
			\end{tabular}
\end{table*}

\textbf{Datasets.}
We use sixteen datasets of two types of signed graphs: real-world original signed graphs and artificially modified real-world graphs. The real-world original signed graphs are from actual networks, for which the original signs are kept unchanged. The artificially modified graphs are generated from real unsigned graphs, by randomly assigning a negative sign to each edge in unsigned graphs with a probability of $0.3$. These network datasets are publicly available in KONECT~\cite{Ku13} and SNAP~\cite{LeSo16}, and their statistic is presented in Table~\ref{tab}, where networks are listed in increasing order of the number of nodes.

\textbf{Internal opinion distributions.} The initial opinion of each individual is important for computing social phenomenon measures  and optimizing the overall opinion. In practice, the initial opinion vector is generally not available, due to privacy concerns.  However, the initial opinions of nodes can be estimated~\cite{DaGoPaSa13}. Since obtaining initial opinions is outside the scope of this paper, in our experiments,
we assume that the initial opinion vector is known beforehand. We use three different distributions of initial opinions: uniform, exponential, and power-law, which are generated as follows. For the uniform distribution, the initial opinion of every node is generated uniformly in the range of $[-1, 1]$. For the exponential and power-law distributions, we first generate the initial opinions of all nodes in the range of $[0, 1]$ as in~\cite{XuBaZh21}. Then for every node, we change its initial opinion to its opposite number with probability of $0.5$.

\subsection{Performance of  Algorithm \textsc{ApproxQuan}}

We first evaluate the efficiency of our fast algorithm \textsc{ApproxQuan} for quantifying various social phenomena. For this purpose, we compare it with the exact algorithm, called \textsc{Exact}, which computes all relevant quantities by inverting the matrix $\II+\LL$. Table~\ref{tab} reports the running time of \textsc{ApproxQuan} and \textsc{Exact} on different networks. As shown in Table~\ref{tab}, the running time of \textsc{ApproxQuan} is always less than that of \textsc{Exact} for each of the considered networks with relatively small sizes. For networks with more than 40,000 nodes, \textsc{Exact} fails to run due to the high memory and time requirements. In contrast, \textsc{ApproxQuan} is able to approximate all the quantities in less than one thousand seconds. Moreover, \textsc{ApproxQuan} is scalable to massive networks with over 20 million nodes.

In addition to being highly efficient, algorithm \textsc{ApproxQuan} is also very accurate, compared with algorithm \textsc{Exact}. To show this, in Table~\ref{tab}, we compare the approximate results for \textsc{ApproxQuan} with the exact results for \textsc{Exact}. For each network, we compute the relative error $|\theta-\tilde{\theta}|/\theta$ for each quantity $\theta$ and its approximation $\tilde{\theta}$ returned by \textsc{ApproxQuan}. Table~\ref{tab} gives the relative errors for the five estimated quantities, including internal conflict $I(\calG)$, disagreement $D(\calG)$, disagreement with friends $F(\calG)$, agreement with opponents $E(\calG)$, and polarization $P(\calG)$. The results indicate that the actual relative errors for all quantities and networks are negligible, with all errors less than $10^{-7}$. Thus, \textsc{ApproxQuan} is not only highly fast but also highly effective in practice.

\begin{figure}[tbp]
	\centering
	\includegraphics[width=0.9\linewidth]{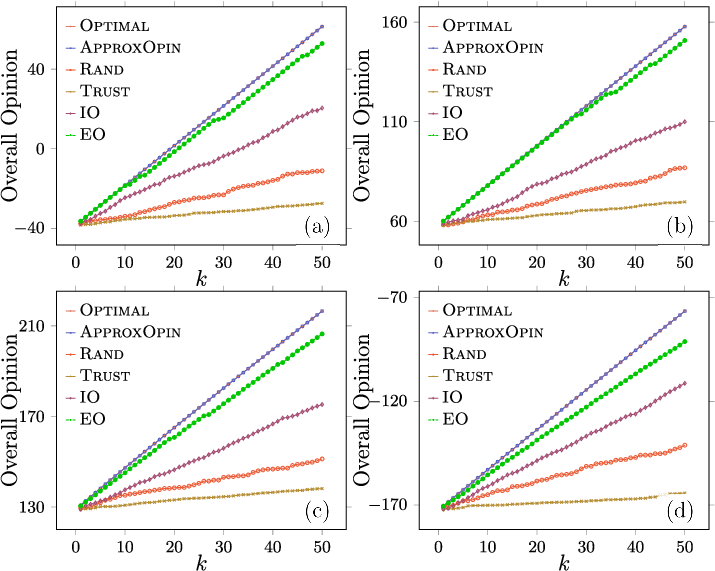}
	\caption{Overall opinions for  algorithms \textsc{ApproxOpin}, \textsc{Optimal}, and four baselines on four real networks: (a) Bitcoinalpha, (b) Wikielections, (c) WikiM, and  (d) Anybeat.}\label{fig:1}	
\end{figure}

\begin{figure}[tbp]
	\centering
	\includegraphics[width=0.9\linewidth]{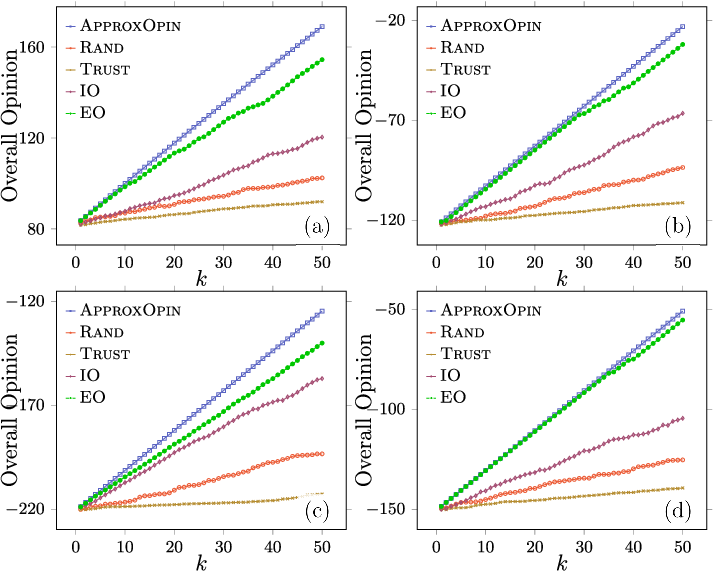}
	\caption{Overall opinions for  algorithms \textsc{ApproxOpin} and four baselines on four real networks: (a) Epinions, (b) WikiL, (c) Pokec, and  (d) FullUSA.}\label{fig:2}	
\end{figure}

\subsection{Performance of Algorithm \textsc{ApproxOpin}}

We continue to evaluate the performance of \textsc{ApproxOpin}. To achieve our goal, we compare algorithm \textsc{ApproxOpin} with the optimal algorithm \textsc{Optimal} and four baselines as in~\cite{XuHuWu20}, including \textsc{Rand}~\cite{LiChWaZh13}, \textsc{Trust}~\cite{ChFaLiFeTaTa15}, \textsc{IO}~\cite{MuMuTs18}, and \textsc{EO}~\cite{ChLiDe18}. \textsc{Rand} randomly selects $k$ nodes and changes their initial opinions to 1. \textsc{Trust} selects $k$ nodes with the largest differences between the numbers of friends and opponents, and changes their initial opinions to 1. \textsc{IO} selects $k$ nodes with the lowest internal opinions and changes their initial opinions to 1. \textsc{EO} selects $k$ nodes with the lowest expressed opinions,  and changes their initial opinions to 1.

We first assess the effectiveness of \textsc{ApproxOpin}. We change the initial opinions of $k=1,2,\ldots,50$ nodes by using \textsc{ApproxOpin}, \textsc{Optimal}, and the four baseline approaches. 
Figure~\ref{fig:1} illustrates the comparison of overall opinion for these methods on four small networks with less than 40,000 nodes, since for networks with over 40,000 nodes, \textsc{Optimal} fails to run. We observe that for these small networks, \textsc{ApproxOpin} consistently returns results that are close to the optimal solutions, both of which outperform the other four baselines. To further demonstrate the accuracy of \textsc{ApproxOpin}, in Table~\ref{tab}, we compare the relative error for the gain of the overall opinion for \textsc{ApproxOpin}  with respect to that for \textsc{Optimal} on seven small networks with $k= 50$. As shown in Table~\ref{tab}, the relative errors are all less than $10^{-7}$, indicating the high similarity of the results obtained by \textsc{ApproxOpin} and \textsc{Optimal}. We also compare \textsc{ApproxOpin} with the baseline strategies on four relatively large networks with over 40,000 nodes, and report the results in Figure~\ref{fig:2}, which again indicates that \textsc{ApproxOpin} is much better than the four baselines.

With regard to the efficiency, in Table~\ref{tab}, we compare the running time of \textsc{ApproxOpin} and \textsc{Optimal} on different networks for $k=50$. As shown in Table~\ref{tab}, \textsc{ApproxOpin} is significantly faster than \textsc{Optimal}, especially when networks become larger. Particularly, \textsc{Optimal} fails to run on networks with more than 40,000 nodes, while \textsc{ApproxOpin} can still run efficiently, which is even scalable to massive networks with more than twenty million nodes.

\section{Related Work}


\textbf{FJ model for opinion dynamics.} 
The FJ model~\cite{FrJo90} is a popular model for opinion dynamics, which has been extensively studied on unsigned graphs. For example, the sufficient condition for stability was studied in~\cite{RaFrTeIs15}, the formula for the equilibrium expressed opinion was derived in~\cite{DaGoPaSa13,BiKlOr15}, and the interpretations were provided from different aspects in~\cite{GiTeTs13,GhSr14,BiKlOr15,XuBaZh21}. Besides, by incorporating different aspects affecting opinion evolution and formulation, many variants of the FJ model have been proposed, including peer pressure~\cite{SeGrSqRa19}, stubbornness~\cite{XuZhGuZhZh22}, interactions among higher-order neighbours~\cite{ZhXuZhCh20}, and so on. Most prior works for the FJ model are based on unsigned graphs, which capture only the positive or cooperative relationships between individuals, ignoring the antagonistic or competitive relationships. Very recently, the FJ model was extended to the signed graphs, which incorporate both cooperative and competitive relationships~\cite{HeZhLiRu20,XuHuWu20,RaHo21,HeZeZhLi22}. For the FJ model on signed graphs, some relevant problems have been addressed, including the convergence criteria~\cite{HeZhLiRu20}, explanation of opinion update~\cite{RaHo21}, and opinion maximization by changing initial opinion~\cite{XuHuWu20}. However, the interpretation for expressed opinions is still lacking.

\textbf{Quantification and algorithms for social phenomena.}
The explosive growth of social media and online social networks produces diverse social phenomena, such as polarization~\cite{MaTeTs17,MuMuTs18,AmBoSi19}, disagreement~\cite{MuMuTs18}, filter bubbles~\cite{BaChLa23,La22}, conflict~\cite{ChLiDe18}, and controversy~\cite{ChLiDe18}, to name a few.
In addressing these challenges, research has evolved in different directions.
Fast algorithms were proposed to efficiently compute these quantities~\cite{XuBaZh21,XuZhGuZhZh22}. Some studies tried to find user groups open to ``counter-information''~\cite{GeMiYoZe18,FaBaGe20} and tried to connect users with opposing views, hoping to lessen filter bubble effects~\cite{ToRoGo21,ZhBaZh21,AmSi19,MuMuTs18}.  
 Moreover, the exploration of using influence models in social media to combat filter bubbles has also gained traction~\cite{TuAsGi20,PiCe19,MaAsGaGi20,GaGiPaTa17}.
These measures and algorithms for social phenomena tend not to apply to signed graphs. To make up for this deficiency, we extend these measures for the FJ model to signed graphs. Due to the incorporation of competitive relationships, previous approximation algorithms~\cite{XuBaZh21,XuZhGuZhZh22} are not suitable for signed graphs anymore. This motivates us to present a nearly linear time algorithm for estimating these quantities on signed graphs.

\textbf{Optimization of overall opinion.}
Various schemes have been proposed to maximize or minimize the overall opinion based on different models for opinion dynamics. On the basis of the DeGroot model, many groups have addressed the problem of maximizing the overall opinion by leader selection~\cite{YiCaPa21,HuZa22,ZhZh23} or link suggestion~\cite{ZhZh21,ZhZhLiZh23}. Based on independent cascade and linear threshold models, a similar problem, called the influence maximization problem, has also been studied~\cite{KeKlTa03,TaXiSh14,TaShXi15,TaTaXiYu18}. The opinion optimization problem has also attracted extensive attention for the FJ model.  In the past decade, different node-based strategies have been applied to optimize the overall opinion of the FJ model on unsigned graphs, including modifying initial opinions~\cite{AhDeHaMaYa15}, expressed opinions~\cite{GiTeTs13}, and susceptibility to persuasion~\cite{AbKlPaTs18,ChLiSo19,AbChKlLiPaSoTs21,MaMiTaTa21}. Most previous research focused on opinion optimization on unsigned graphs, with the exception of a few work~\cite{XuHuWu20}. In~\cite{XuHuWu20}, the problem of opinion optimization on signed graphs was studied by changing initial and external opinions, and two algorithms were developed  with complexity $O(n^{3})$, which are computationally infeasible for large graphs. Although we address a similar problem, our algorithm is efficient and effective, with nearly-linear time complexity and a proven error guarantee compared to the optimal solution.

\textbf{Research on signed graphs.}
A concerted effort has been devoted to delving into diverse aspects of signed graphs, including finding conflicting groups~\cite{TzOrGi20}, exploring polarization~\cite{XiOrGi20}, detecting communities~\cite{BoGaGiOrRu19, SuChWaZhWa20}, and so on. Moreover, clique computation and enumeration have also attracted much attention. Different algorithms have been proposed for computing maximum structural balanced cliques~\cite{YaChQi22}, enumerating maximal balanced bicliques~\cite{SuWuChWaZhLi22}, and searching signed cliques~\cite{LiDaQiWaXiYuQi19} on signed graphs. Finally, the influence diffusion process and influence maximization problem have also been studied on signed networks~\cite{LiLi19, YiHuChYuLi19, KaKhDaKuKh22}. However, the methods for studies on signed graphs are not applicable to the FJ model defined on signed networks.

\section{Conclusion}

In this paper, we studied the Friedkin-Johnsen (FJ) model for opinion dynamic on a signed graph. We first interpreted the equilibrium opinion of every node by expressing it in terms of the absorbing probabilities of a defined absorbing random walk on an augmented signed graph. We then quantified some relevant social phenomena and represented them as the $\ell_2$ norms of vectors. Moreover, we proposed a signed Laplacian solver, which approximately evaluates these quantities in nearly-linear time but has error guarantees. We also considered the problem of opinion optimization by modifying the initial opinions of a fixed number of nodes, and presented two algorithms to solve this problem. The first algorithm optimally solves the problem in cubic time, while the second algorithm provides an approximation solution with an error guarantee in nearly-linear time. Extensive experiments on real signed graphs demonstrate the effectiveness and efficiency of our approximation algorithms.

It deserves to mention that although we focus on unweighted signed graphs,  our analyses and algorithms for the  FJ opinion dynamics model can be  extended to  weighted signed graphs.
Future work includes the applications of our signed Laplacian solver to other problems for the FJ model on signed graph, such as optimizing disagreement, conflict, and polarization under different constraints, or maximizing (or minimizing) the overall opinion by using other strategies different from that used in this paper.


\bibliographystyle{IEEEtran}
\normalem
\bibliography{revision,signFJ,newref,expressedopinion,hyper,kedges}

\providecommand{\noopsort}[1]{}\providecommand{\singleletter}[1]{#1}
\begin{thebibliography}{10}
\providecommand{\url}[1]{#1}
\csname url@rmstyle\endcsname
\providecommand{\newblock}{\relax}
\providecommand{\bibinfo}[2]{#2}
\providecommand\BIBentrySTDinterwordspacing{\spaceskip=0pt\relax}
\providecommand\BIBentryALTinterwordstretchfactor{4}
\providecommand\BIBentryALTinterwordspacing{\spaceskip=\fontdimen2\font plus
\BIBentryALTinterwordstretchfactor\fontdimen3\font minus
  \fontdimen4\font\relax}
\providecommand\BIBforeignlanguage[2]{{%
\expandafter\ifx\csname l@#1\endcsname\relax
\typeout{** WARNING: IEEEtran.bst: No hyphenation pattern has been}%
\typeout{** loaded for the language `#1'. Using the pattern for}%
\typeout{** the default language instead.}%
\else
\language=\csname l@#1\endcsname
\fi
#2}}

\bibitem{Le20}
H.~Ledford, ``How facebook, twitter and other data troves are revolutionizing
  social science,'' \emph{Nature}, vol. 582, no. 7812, pp. 328--330, 2020.

\bibitem{SmCh08}
K.~P. Smith and N.~A. Christakis, ``Social networks and health,'' \emph{Annu.
  Rev. Sociol.}, vol.~34, no.~1, pp. 405--429, 2008.

\bibitem{JiMiFrBu15}
P.~Jia, A.~MirTabatabaei, N.~E. Friedkin, and F.~Bullo, ``Opinion dynamics and
  the evolution of social power in influence networks,'' \emph{SIAM Rev.},
  vol.~57, no.~3, pp. 367--397, 2015.

\bibitem{AnYe19}
B.~D. Anderson and M.~Ye, ``Recent advances in the modelling and analysis of
  opinion dynamics on influence networks,'' \emph{Int. J. Autom. Comput.},
  vol.~16, no.~2, pp. 129--149, 2019.

\bibitem{DaGoMu14}
A.~Das, S.~Gollapudi, and K.~Munagala, ``Modeling opinion dynamics in social
  networks,'' in \emph{Proc. 7th ACM Int. Conf. Web Search Data Min.}\hskip 1em
  plus 0.5em minus 0.4em\relax ACM, 2014, pp. 403--412.

\bibitem{FoPaSk16}
D.~Fotakis, D.~Palyvos-Giannas, and S.~Skoulakis, ``Opinion dynamics with local
  interactions.'' in \emph{Proc. 25th Int. Joint Conf. Artif. Intell.}, 2016,
  pp. 279--285.

\bibitem{AuFeGr18}
V.~Auletta, D.~Ferraioli, and G.~Greco, ``Reasoning about consensus when
  opinions diffuse through majority dynamics.'' in \emph{Proc. 27th Int. Joint
  Conf. Artif. Intell.}, 2018, pp. 49--55.

\bibitem{MaTeTs17}
A.~Matakos, E.~Terzi, and P.~Tsaparas, ``Measuring and moderating opinion
  polarization in social networks,'' \emph{Data Min. Knowl. Discov.}, vol.~31,
  no.~5, pp. 1480--1505, 2017.

\bibitem{MuMuTs18}
C.~Musco, C.~Musco, and C.~E. Tsourakakis, ``Minimizing polarization and
  disagreement in social networks,'' in \emph{Proc. 2018 World Wide Web Conf.},
  2018, pp. 369--378.

\bibitem{ChLiDe18}
X.~Chen, J.~Lijffijt, and T.~De~Bie, ``Quantifying and minimizing risk of
  conflict in social networks,'' in \emph{Proc. 24th ACM SIGKDD Int. Conf.
  Knowl. Discov. Data Min.}\hskip 1em plus 0.5em minus 0.4em\relax ACM, 2018,
  pp. 1197--1205.

\bibitem{GiTeTs13}
A.~Gionis, E.~Terzi, and P.~Tsaparas, ``Opinion maximization in social
  networks,'' in \emph{Proc. 2013 SIAM Int. Conf. Data Min.}\hskip 1em plus
  0.5em minus 0.4em\relax SIAM, 2013, pp. 387--395.

\bibitem{AbKlPaTs18}
R.~Abebe, J.~Kleinberg, D.~Parkes, and C.~E. Tsourakakis, ``Opinion dynamics
  with varying susceptibility to persuasion,'' in \emph{Proc. 24th ACM SIGKDD
  Int. Conf. Knowl. Discov. Data Min.}\hskip 1em plus 0.5em minus 0.4em\relax
  ACM, 2018, pp. 1089--1098.

\bibitem{ChLiSo19}
T.-H.~H. Chan, Z.~Liang, and M.~Sozio, ``Revisiting opinion dynamics with
  varying susceptibility to persuasion via non-convex local search,'' in
  \emph{Proc. 2019 World Wide Web Conf.}\hskip 1em plus 0.5em minus 0.4em\relax
  ACM, 2019, pp. 173--183.

\bibitem{MaMiTaTa21}
N.~Marumo, A.~Miyauchi, A.~Takeda, and A.~Tanaka, ``A projected gradient method
  for opinion optimization with limited changes of susceptibility to
  persuasion,'' in \emph{Proc. 30th ACM Int. Conf. Inf. Knowl. Manag.}, 2021,
  pp. 1274--1283.

\bibitem{FrJo90}
N.~E. Friedkin and E.~C. Johnsen, ``Social influence and opinions,'' \emph{J.
  Math. Sociol.}, vol.~15, no. 3-4, pp. 193--206, 1990.

\bibitem{BeWaVaHoShAl21}
C.~Bernardo, L.~Wang, F.~Vasca, Y.~Hong, G.~Shi, and C.~Altafini, ``Achieving
  consensus in multilateral international negotiations: The case study of the
  2015 paris agreement on climate change,'' \emph{Sci. Adv.}, vol.~7, no.~51,
  p. eabg8068, 2021.

\bibitem{FrPrTePa16}
N.~E. Friedkin, A.~V. Proskurnikov, R.~Tempo, and S.~E. Parsegov, ``Network
  science on belief system dynamics under logic constraints,'' \emph{Science},
  vol. 354, no. 6310, pp. 321--326, 2016.

\bibitem{FrBu17}
N.~E. Friedkin and F.~Bullo, ``How truth wins in opinion dynamics along issue
  sequences,'' \emph{Proc. Natl. Acad. Sci.}, vol. 114, no.~43, pp.
  11\,380--11\,385, 2017.

\bibitem{DaGoLe13}
P.~Dandekar, A.~Goel, and D.~T. Lee, ``Biased assimilation, homophily, and the
  dynamics of polarization,'' \emph{Proc. Natl. Acad. Sci.}, vol. 110, no.~15,
  pp. 5791--5796, 2013.

\bibitem{XuBaZh21}
W.~Xu, Q.~Bao, and Z.~Zhang, ``Fast evaluation for relevant quantities of
  opinion dynamics,'' in \emph{Proc. Web Conf.}\hskip 1em plus 0.5em minus
  0.4em\relax ACM, 2021, pp. 2037--2045.

\bibitem{XuZhGuZhZh22}
W.~Xu, L.~Zhu, J.~Guan, Z.~Zhang, and Z.~Zhang, ``Effects of stubbornness on
  opinion dynamics,'' in \emph{Proc. 31st ACM Int. Conf. Inf. Knowl. Manag.},
  2022, pp. 2321--2330.

\bibitem{AhDeHaMaYa15}
A.~M. Ahmadinejad, S.~Dehghani, M.~T. Hajiaghayi, H.~Mahini, and S.~Yazdanbod,
  ``Forming external behaviors by leveraging internal opinions,'' in
  \emph{Proc. 2015 IEEE Conf. Comput. Commun.}\hskip 1em plus 0.5em minus
  0.4em\relax IEEE, 2015, pp. 2728--2734.

\bibitem{AbChKlLiPaSoTs21}
R.~Abebe, T.-H.~H. Chan, J.~Kleinberg, Z.~Liang, D.~Parkes, M.~Sozio, and C.~E.
  Tsourakakis, ``Opinion dynamics optimization by varying susceptibility to
  persuasion via non-convex local search,'' \emph{ACM Trans. Knowl. Discov.
  Data}, vol.~16, no.~2, pp. 1--34, 2021.

\bibitem{SuZh23}
H.~Sun and Z.~Zhang, ``Opinion optimization in directed social networks,'' in
  \emph{Proc. AAAI Conf. Artif. Intell.}, vol.~37, no.~4, 2023, pp. 4623--4632.

\bibitem{SeGrSqRa19}
J.~Semonsen, C.~Griffin, A.~Squicciarini, and S.~Rajtmajer, ``Opinion dynamics
  in the presence of increasing agreement pressure,'' \emph{IEEE Trans.
  Cybern.}, vol.~49, no.~4, pp. 1270--1278, 2019.

\bibitem{ChMu20}
U.~Chitra and C.~Musco, ``Analyzing the impact of filter bubbles on social
  network polarization,'' in \emph{Proc. Thirteenth ACM Int. Conf. Web Search
  Data Min.}\hskip 1em plus 0.5em minus 0.4em\relax ACM, 2020, pp. 115--123.

\bibitem{ShAlBa19}
G.~Shi, C.~Altafini, and J.~S. Baras, ``Dynamics over signed networks,''
  \emph{SIAM Rev.}, vol.~61, no.~2, pp. 229--257, 2019.

\bibitem{ShPrJoBaJo16}
G.~Shi, A.~Proutiere, M.~Johansson, J.~S. Baras, and K.~H. Johansson, ``The
  evolution of beliefs over signed social networks,'' \emph{Oper. Res.},
  vol.~64, no.~3, pp. 585--604, 2016.

\bibitem{XuHuWu20}
P.~Xu, W.~Hu, J.~Wu, and W.~Liu, ``Opinion maximization in social trust
  networks,'' in \emph{Proc. 29th Int. Joint Conf. Artif. Intell.}, 2020, pp.
  1251--1257.

\bibitem{RaHo21}
I.~Rahaman and P.~Hosein, ``Extending degroot opinion formation for signed
  graphs and minimizing polarization,'' in \emph{Complex Netw.}\hskip 1em plus
  0.5em minus 0.4em\relax Springer, 2021, pp. 298--309.

\bibitem{HeZhLiRu20}
G.~He, W.~Zhang, J.~Liu, and H.~Ruan, ``Opinion dynamics with the increasing
  peer pressure and prejudice on the signed graph,'' \emph{Nonlinear Dynam.},
  vol.~99, no.~4, pp. 3421--3433, 2020.

\bibitem{HeZeZhLi22}
C.~He, J.~Zeng, G.~Zhang, and S.~Liu, ``Generalized opinion dynamics model for
  social trust networks,'' \emph{J. Comb. Optim.}, vol.~44, no.~5, pp.
  3641--3662, 2022.

\bibitem{TaChAgLi16}
J.~Tang, Y.~Chang, C.~Aggarwal, and H.~Liu, ``A survey of signed network mining
  in social media,'' \emph{ACM Comput.}, vol.~49, no.~3, pp. 1--37, 2016.

\bibitem{ChNaTeDh11}
K.-Y. Chiang, N.~Natarajan, A.~Tewari, and I.~S. Dhillon, ``Exploiting longer
  cycles for link prediction in signed networks,'' in \emph{Proc. 20th ACM Int.
  Conf. Inf. Knowl. Manag.}, 2011, pp. 1157--1162.

\bibitem{BiKlOr15}
D.~Bindel, J.~Kleinberg, and S.~Oren, ``How bad is forming your own opinion?''
  \emph{Games Econ. Behav.}, vol.~92, pp. 248--265, 2015.

\bibitem{LiChWaZh13}
Y.~Li, W.~Chen, Y.~Wang, and Z.-L. Zhang, ``Influence diffusion dynamics and
  influence maximization in social networks with friend and foe
  relationships,'' in \emph{Proc. Sixth ACM Int. Conf. Web Search Data Min.},
  2013, pp. 657--666.

\bibitem{NePe22}
S.~Neumann and P.~Peng, ``Sublinear-time clustering oracle for signed graphs,''
  in \emph{Int. Conf. Mach. Learn.}\hskip 1em plus 0.5em minus 0.4em\relax
  PMLR, 2022, pp. 16\,496--16\,528.

\bibitem{He14}
J.~M. Hendrickx, ``A lifting approach to models of opinion dynamics with
  antagonisms,'' in \emph{53rd IEEE Conf. Decis. Control}.\hskip 1em plus 0.5em
  minus 0.4em\relax IEEE, 2014, pp. 2118--2123.

\bibitem{VaTrAn21}
Q.~Van~Tran, M.~H. Trinh, and H.-S. Ahn, ``Discrete-time matrix-weighted
  consensus,'' \emph{IEEE Trans. Control Netw. Syst.}, vol.~8, no.~4, pp.
  1568--1578, 2021.

\bibitem{MaAb19}
V.~S. Mai and E.~H. Abed, ``Optimizing leader influence in networks through
  selection of direct followers,'' \emph{IEEE Trans. Automat. Contr.}, vol.~64,
  no.~3, pp. 1280--1287, 2019.

\bibitem{SpTe14}
D.~A. Spielman and S.-H. Teng, ``Nearly linear time algorithms for
  preconditioning and solving symmetric, diagonally dominant linear systems,''
  \emph{SIAM J. Matrix Anal. Appl.}, vol.~35, no.~3, pp. 835--885, 2014.

\bibitem{CoKyMiPaJaPeRaXu14}
M.~B. Cohen, R.~Kyng, G.~L. Miller, J.~W. Pachocki, R.~Peng, A.~B. Rao, and
  S.~C. Xu, ``Solving {SDD} linear systems in nearly $m \log^{1/2} n$ time,''
  in \emph{Proc. 46th Annu. ACM Symp. Theory Comput.}\hskip 1em plus 0.5em
  minus 0.4em\relax ACM, 2014, pp. 343--352.

\bibitem{ZhZh21}
X.~Zhou and Z.~Zhang, ``Maximizing influence of leaders in social networks,''
  in \emph{Proc. 27th ACM SIGKDD Int. Conf. Knowl. Discov. Data Min.}, 2021,
  pp. 2400--2408.

\bibitem{ZhBaZh21}
L.~Zhu, Q.~Bao, and Z.~Zhang, ``Minimizing polarization and disagreement in
  social networks via link recommendation,'' in \emph{Adv. Neural Inf. Process.
  Syst.}, 2021, pp. 2072--2084.

\bibitem{CiGiBo23}
F.~Cinus, A.~Gionis, and F.~Bonchi, ``Rebalancing social feed to minimize
  polarization and disagreement,'' in \emph{Proc. 32nd ACM Int. Conf. Inf.
  Knowl. Manag.}, 2023, pp. 369--378.

\bibitem{Fr11}
N.~E. Friedkin, ``A formal theory of reflected appraisals in the evolution of
  power,'' \emph{Adm. Sci. Q.}, vol.~56, no.~4, pp. 501--529, 2011.

\bibitem{Ku13}
J.~Kunegis, ``Konect: the {K}oblenz network collection,'' in \emph{Proc. 22nd
  Int. Conf. World Wide Web}, 2013, pp. 1343--1350.

\bibitem{LeSo16}
J.~Leskovec and R.~Sosi{\v{c}}, ``{SNAP}: A general-purpose network analysis
  and graph-mining library,'' \emph{ACM Trans. Intel. Syst. Tec.}, vol.~8,
  no.~1, p.~1, 2016.

\bibitem{DaGoPaSa13}
A.~Das, S.~Gollapudi, R.~Panigrahy, and M.~Salek, ``Debiasing social wisdom,''
  in \emph{Proc. 19th ACM SIGKDD Int. Conf. Knowl. Discov. Data Min.}\hskip 1em
  plus 0.5em minus 0.4em\relax ACM, 2013, pp. 500--508.

\bibitem{ChFaLiFeTaTa15}
S.~Chen, J.~Fan, G.~Li, J.~Feng, K.-l. Tan, and J.~Tang, ``Online topic-aware
  influence maximization,'' \emph{Proc. VLDB Endow.}, vol.~8, no.~6, pp.
  666--677, 2015.

\bibitem{RaFrTeIs15}
C.~Ravazzi, P.~Frasca, R.~Tempo, and H.~Ishii, ``Ergodic randomized algorithms
  and dynamics over networks,'' \emph{IEEE Trans. Control Netw. Syst.}, vol.~1,
  no.~2, pp. 78--87, 2015.

\bibitem{GhSr14}
J.~Ghaderi and R.~Srikant, ``Opinion dynamics in social networks with stubborn
  agents: {E}quilibrium and convergence rate,'' \emph{Automatica}, vol.~50,
  no.~12, pp. 3209--3215, 2014.

\bibitem{ZhXuZhCh20}
Z.~Zhang, W.~Xu, Z.~Zhang, and G.~Chen, ``Opinion dynamics incorporating
  higher-order interactions,'' in \emph{20th IEEE Int. Conf. Data Min.}\hskip
  1em plus 0.5em minus 0.4em\relax {IEEE}, 2020, pp. 1430--1435.

\bibitem{AmBoSi19}
V.~Amelkin, P.~Bogdanov, and A.~K. Singh, ``A distance measure for the analysis
  of polar opinion dynamics in social networks,'' \emph{ACM Trans. Knowl.
  Discov. Data}, vol.~13, no.~4, pp. 1--34, 2019.

\bibitem{BaChLa23}
P.~Banerjee, W.~Chen, and L.~V. Lakshmanan, ``Mitigating filter bubbles under a
  competitive diffusion model,'' \emph{Proc. ACM Manag. Data}, vol.~1, no.~2,
  pp. 1--26, 2023.

\bibitem{La22}
L.~V. Lakshmanan, ``On a quest for combating filter bubbles and
  misinformation,'' in \emph{Proc. 2022 Int. Conf. Manag. Data}, 2022, pp.
  2--2.

\bibitem{GeMiYoZe18}
S.~Gershtein, T.~Milo, B.~Youngmann, and G.~Zeevi, ``Im balanced: influence
  maximization under balance constraints,'' in \emph{Proc. 27th ACM Int. Conf.
  Inf. Knowl. Manag.}, 2018, pp. 1919--1922.

\bibitem{FaBaGe20}
G.~Farnad, B.~Babaki, and M.~Gendreau, ``A unifying framework for
  fairness-aware influence maximization,'' in \emph{Companion Proc. Web Conf.
  2020}, 2020, pp. 714--722.

\bibitem{ToRoGo21}
A.~Tommasel, J.~M. Rodriguez, and D.~Godoy, ``I want to break free!
  recommending friends from outside the echo chamber,'' in \emph{Proc. 15th ACM
  Conf. Recomm. Syst.}, 2021, pp. 23--33.

\bibitem{AmSi19}
V.~Amelkin and A.~K. Singh, ``Fighting opinion control in social networks via
  link recommendation,'' in \emph{Proc. 25th ACM SIGKDD Int. Conf. Knowl.
  Discov. Data Min.}, 2019, pp. 677--685.

\bibitem{TuAsGi20}
S.~Tu, C.~Aslay, and A.~Gionis, ``Co-exposure maximization in online social
  networks,'' \emph{Adv. Neural Inf. Process. Syst.}, vol.~33, pp. 3232--3243,
  2020.

\bibitem{PiCe19}
F.~Pierri and S.~Ceri, ``False news on social media: a data-driven survey,''
  \emph{ACM SIGMOD Rec.}, vol.~48, no.~2, pp. 18--27, 2019.

\bibitem{MaAsGaGi20}
A.~Matakos, C.~Aslay, E.~Galbrun, and A.~Gionis, ``Maximizing the diversity of
  exposure in a social network,'' \emph{IEEE Trans. Knowl. Data Eng.}, vol.~34,
  no.~9, pp. 4357--4370, 2020.

\bibitem{GaGiPaTa17}
K.~Garimella, A.~Gionis, N.~Parotsidis, and N.~Tatti, ``Balancing information
  exposure in social networks,'' \emph{Adv. Neural Inf. Process. Syst.},
  vol.~30,  2017, pp. 4663--4671.

\bibitem{YiCaPa21}
Y.~Yi, T.~Castiglia, and S.~Patterson, ``Shifting opinions in a social network
  through leader selection,'' \emph{IEEE Trans. Control Netw. Syst.}, vol.~8,
  no.~3, pp. 1116--1127, 2021.

\bibitem{HuZa22}
D.~S. Hunter and T.~Zaman, ``Optimizing opinions with stubborn agents,''
  \emph{Oper. Res.}, vol.~70, no.~4, pp. 2119--2137, 2022.

\bibitem{ZhZh23}
X.~Zhou and Z.~Zhang, ``Opinion maximization in social networks via leader
  selection,'' in \emph{Proc. ACM Web Conf.}, 2023, pp. 133--142.

\bibitem{ZhZhLiZh23}
X.~Zhou, L.~Zhu, W.~Li, and Z.~Zhang, ``A sublinear time algorithm for opinion
  optimization in directed social networks via edge recommendation,'' in
  \emph{Proc. 29th ACM SIGKDD Int. Conf. Knowl. Discov. Data Min.}\hskip 1em
  plus 0.5em minus 0.4em\relax ACM, 2023, pp. 3593--3602.

\bibitem{KeKlTa03}
D.~Kempe, J.~Kleinberg, and E.~Tardos, ``Maximizing the spread of influence
  through a social network,'' in \emph{Proc. Ninth ACM SIGKDD Int. Conf. Knowl.
  Discov. Data Min.}\hskip 1em plus 0.5em minus 0.4em\relax KDD, 2003, pp.
  137--146.

\bibitem{TaXiSh14}
Y.~Tang, X.~Xiao, and Y.~Shi, ``Influence maximization: Near-optimal time
  complexity meets practical efficiency,'' in \emph{Proc. 2014 ACM SIGMOD Int.
  Conf. Manag. Data}, 2014, pp. 75--86.

\bibitem{TaShXi15}
Y.~Tang, Y.~Shi, and X.~Xiao, ``Influence maximization in near-linear time: A
  martingale approach,'' in \emph{Proc. 2015 ACM SIGMOD Int. Conf. Manag.
  Data}, 2015, pp. 1539--1554.

\bibitem{TaTaXiYu18}
J.~Tang, X.~Tang, X.~Xiao, and J.~Yuan, ``Online processing algorithms for
  influence maximization,'' in \emph{Proc. 2018 ACM SIGMOD Int. Conf. Manag.
  Data}, 2018, pp. 991--1005.

\bibitem{TzOrGi20}
R.~Tzeng, B.~Ordozgoiti, and A.~Gionis, ``Discovering conflicting groups in
  signed networks,'' in \emph{Adv. Neural Inf. Process. Syst.}, 2020, pp. 10974--10985.

\bibitem{XiOrGi20}
H.~Xiao, B.~Ordozgoiti, and A.~Gionis, ``Searching for polarization in signed
  graphs: a local spectral approach,'' in \emph{Proc. Web Conf. 2020}, 2020,
  pp. 362--372.

\bibitem{BoGaGiOrRu19}
F.~Bonchi, E.~Galimberti, A.~Gionis, B.~Ordozgoiti, and G.~Ruffo, ``Discovering
  polarized communities in signed networks,'' in \emph{Proc. 28th ACM Int.
  Conf. Inf. Knowl. Manag.}\hskip 1em plus 0.5em minus 0.4em\relax {ACM}, 2019,
  pp. 961--970.

\bibitem{SuChWaZhWa20}
R.~Sun, C.~Chen, X.~Wang, Y.~Zhang, and X.~Wang, ``Stable community detection
  in signed social networks,'' \emph{IEEE Trans. Knowl. Data Eng.}, vol.~34,
  no.~10, pp. 5051--5055, 2020.

\bibitem{YaChQi22}
K.~Yao, L.~Chang, and L.~Qin, ``Computing maximum structural balanced cliques
  in signed graphs,'' in \emph{2022 IEEE 38th Int. Conf. Data Eng.}\hskip 1em
  plus 0.5em minus 0.4em\relax IEEE, 2022, pp. 1004--1016.

\bibitem{SuWuChWaZhLi22}
R.~Sun, Y.~Wu, C.~Chen, X.~Wang, W.~Zhang, and X.~Lin, ``Maximal balanced
  signed biclique enumeration in signed bipartite graphs,'' in \emph{2022 IEEE
  38th Int. Conf. Data Eng.}\hskip 1em plus 0.5em minus 0.4em\relax IEEE, 2022,
  pp. 1887--1899.

\bibitem{LiDaQiWaXiYuQi19}
R.-H. Li, Q.~Dai, L.~Qin, G.~Wang, X.~Xiao, J.~X. Yu, and S.~Qiao, ``Signed
  clique search in signed networks: concepts and algorithms,'' \emph{IEEE
  Trans. Knowl. Data Eng.}, vol.~33, no.~2, pp. 710--727, 2019.

\bibitem{LiLi19}
D.~Li and J.~Liu, ``Modeling influence diffusion over signed social networks,''
  \emph{IEEE Trans. Knowl. Data Eng.}, vol.~33, no.~2, pp. 613--625, 2019.

\bibitem{YiHuChYuLi19}
X.~Yin, X.~Hu, Y.~Chen, X.~Yuan, and B.~Li, ``Signed-pagerank: An efficient
  influence maximization framework for signed social networks,'' \emph{IEEE
  Trans. Knowl. Data Eng.}, vol.~33, no.~5, pp. 2208--2222, 2019.

\bibitem{KaKhDaKuKh22}
T.~Kalimzhanov, A.~H.~A. Khamseh’i, A.~Dadlani, M.~S. Kumar, and A.~Khonsari,
  ``Co-evolution of viral processes and structural stability in signed social
  networks,'' \emph{IEEE Trans. Knowl. Data Eng.}, vol.~35, no.~8, pp.
  7809--7814, 2023.

\end{thebibliography}

\begin{IEEEbiography}[{\includegraphics[width=1in,height=1.25in,clip,keepaspectratio]{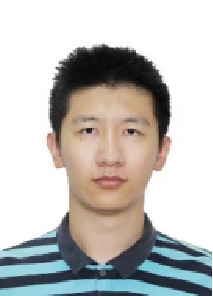}}]
{Xiaotian Zhou}
received the B.S. degree in mathematics science from Fudan University, Shanghai, China, in 2020. He is currently pursuing the PhD's degree in School of Computer Science, Fudan University, Shanghai, China. His research interests include network science, computational social science, graph data mining, and social network analysis.
\end{IEEEbiography}

\begin{IEEEbiography}[{\includegraphics[width=1in,height=1.25in,clip,keepaspectratio]{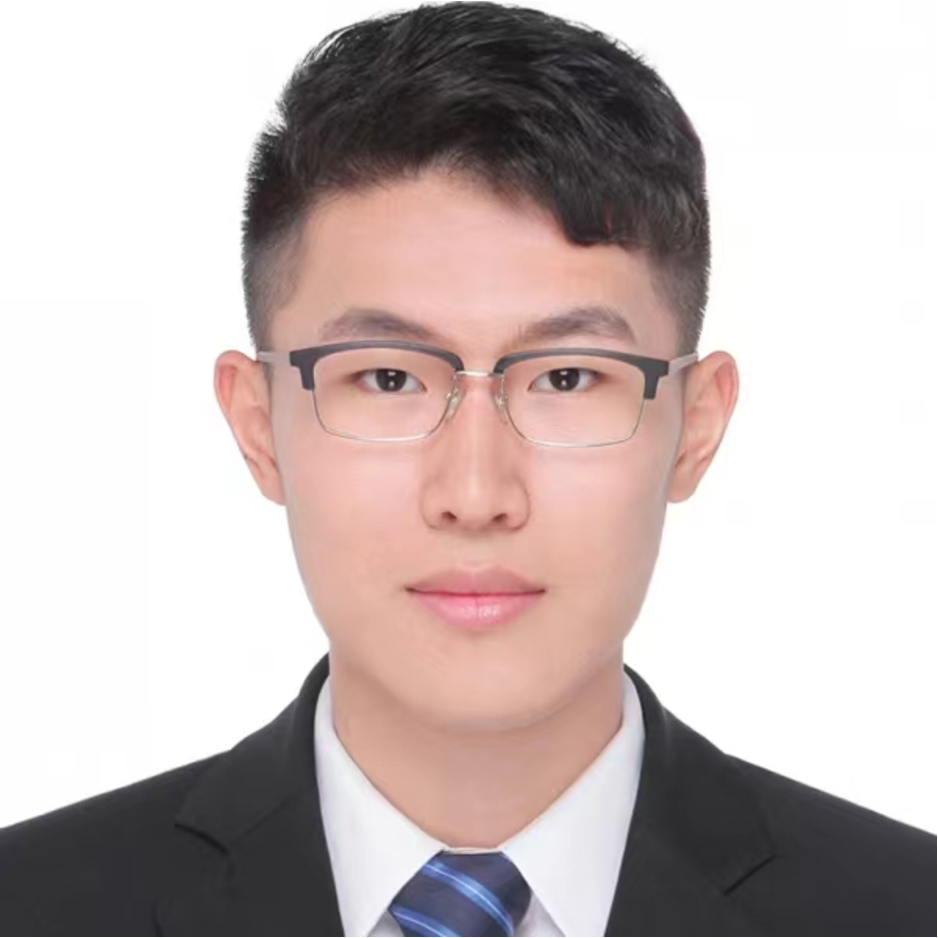}}]
{Haoxin Sun}
received the B.S. degree in mathematics science from Fudan University, Shanghai, China, in 2021. He is currently pursuing the PhD's degree in School of Computer Science, Fudan University, Shanghai, China. His research interests include network science, computational social science, graph data mining, and social network analysis.
\end{IEEEbiography}

\begin{IEEEbiography}[{\includegraphics[width=1in,height=1.25in,clip,keepaspectratio]{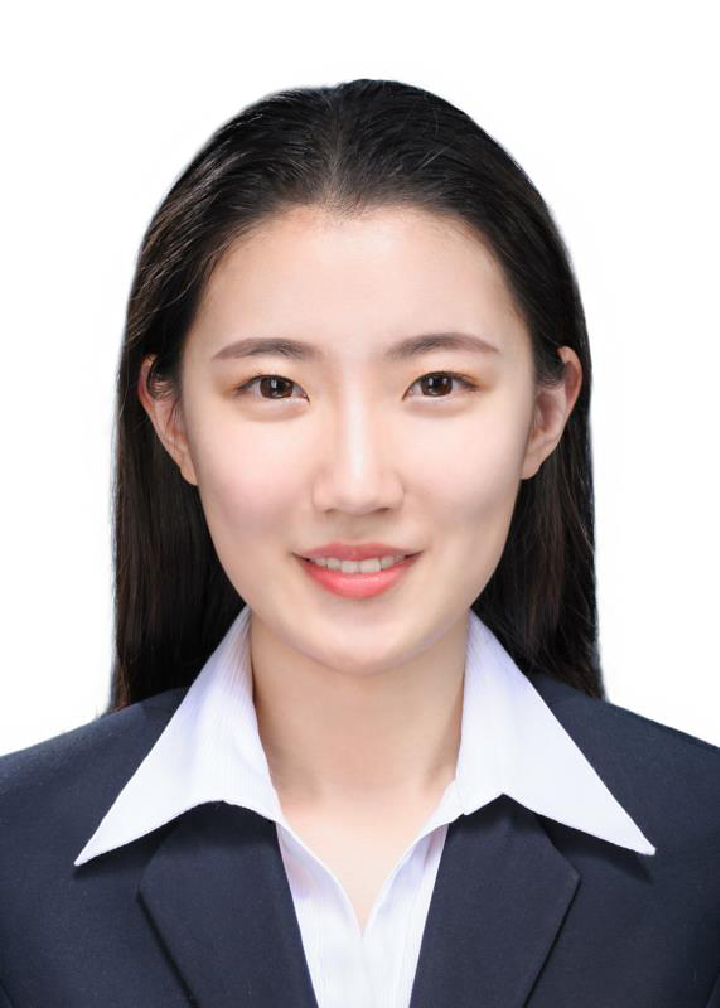}}]{Wanyue Xu}
	received the B.Eng. degree in School of Mechanical, Electrical, and Information Engineering, Shangdong University, Weihai, China, in 2019. She is currently pursuing  the Ph.D.	degree in School of Computer Science, Fudan University, Shanghai, China.  Her
	research interests include network science, graph
	data mining, social network analysis, and random
	walks. She is a student member of the IEEE.
\end{IEEEbiography}

\begin{IEEEbiography}
[{\includegraphics[width=1in,height=1.25in,clip,keepaspectratio]{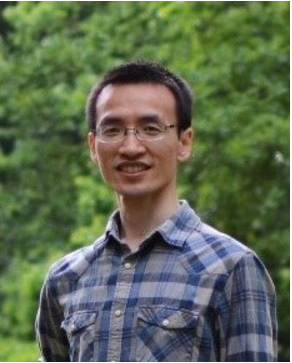}}]{Wei Li}
received the B.Eng. degree in automation and the M.Eng. degree in control science and engineering from the Harbin Institute of Technology, China, in 2009 and 2011, respectively, and the Ph.D. degree from the University of Sheffield, U.K., in 2016. After being a research associate at the University of York, UK, he is currently an associate professor with the Academy for Engineering and Technology, Fudan University. His research interests include robotics and computational intelligence, and especially self-organized/swarm systems, and evolutionary machine learning.
\end{IEEEbiography}

\begin{IEEEbiography}[{\includegraphics[width=1in,height=1.25in,clip,keepaspectratio]{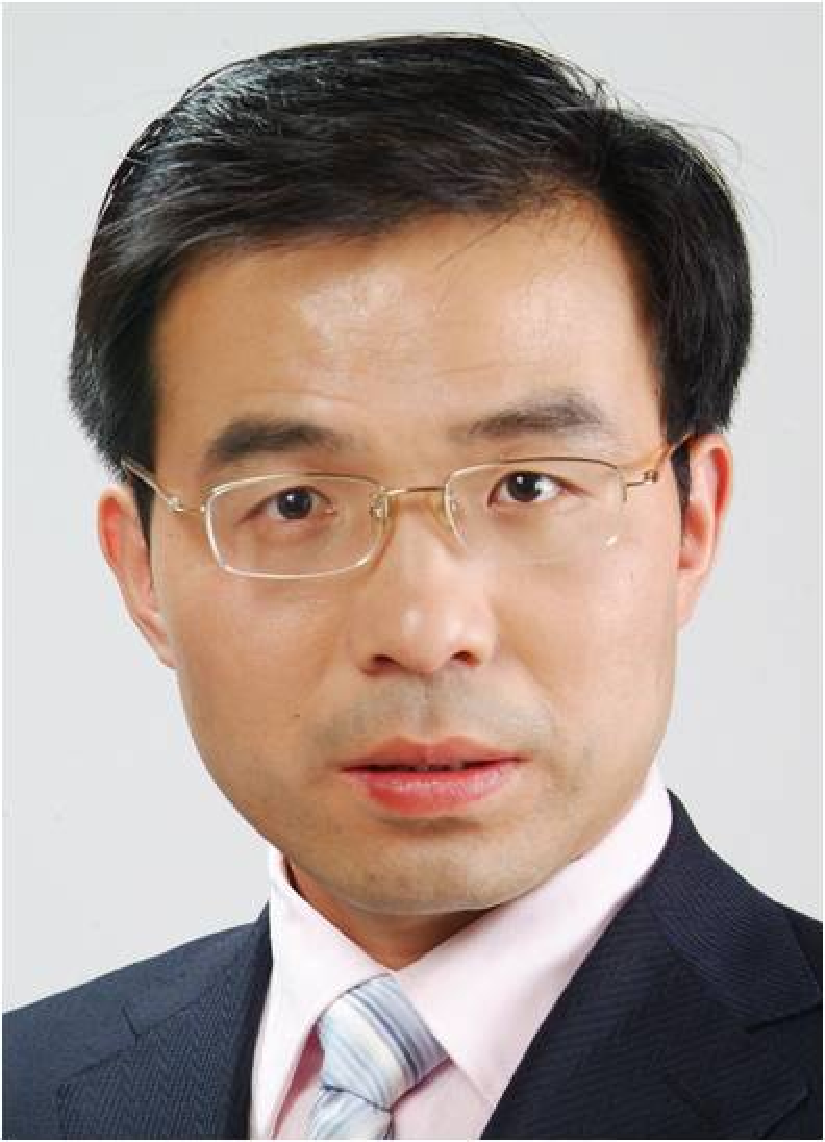}}]
{Zhongzhi Zhang}
	(M'19)	 received the B.Sc. degree in applied mathematics from Anhui University, Hefei, China, in 1997 and the Ph.D. degree in management science and engineering from Dalian University of Technology, Dalian, China, in 2006. \\
	From 2006 to 2008, he was a Post-Doctoral Research Fellow with Fudan University, Shanghai, China, where he is currently a Full Professor with the School of Computer Science. 
 Since 2019, he has been selected as one of the most cited Chinese researchers
	(Elsevier) every year. 
 His current research interests include network science, graph data mining, social network analysis, computational social science, spectral graph theory, and random walks. \\
	 Dr. Zhang was a recipient of the Excellent Doctoral Dissertation Award of Liaoning Province, China, in 2007, the Excellent Post-Doctor Award of Fudan University in 2008, the Shanghai Natural Science Award (third class) in 2013, the CCF Natural Science Award (second class) in 2022, and the Wilkes Award for the best paper published in The Computer Journal in 2019.
\end{IEEEbiography}

\end{document}